\RequirePackage{amsmath}
\documentclass[12pt,runningheads]{llncs}

\usepackage[usenames]{color}
\usepackage{amssymb}
\usepackage{amsmath}
\usepackage{amsfonts}
\usepackage{amscd}
\usepackage{graphicx}
\usepackage{subfigure}

\usepackage{todonotes}

\usepackage[colorlinks=true,
linkcolor=webgreen,
filecolor=webbrown,
citecolor=webgreen]{hyperref}
\usepackage[capitalize,nameinlink,sort]{cleveref}

\definecolor{webgreen}{rgb}{0,.5,0}
\definecolor{webbrown}{rgb}{.6,0,0}

\usepackage{color}
\usepackage{fullpage}
\usepackage{float}

\usepackage{graphics}
\usepackage{latexsym}
\usepackage{epsf}

\usepackage{mathtools}
\usepackage{bbm}
\usepackage{enumerate}
\usepackage{pgf,tikz,tikz-cd}
\usetikzlibrary{arrows,arrows.meta,automata, calc, chains, decorations, shapes,positioning,decorations.pathmorphing}

\newcommand{\tO}{\mathtt{0}}
\newcommand{\tL}{\mathtt{1}}

\newcommand{\cP}{\mathcal{P}}

\renewcommand{\phi}{\varphi}

\newcommand{\eps}{\varepsilon}

\newcommand{\fract}[1]{\left\{ #1 \right\}}

\newcommand{\nzero}{n^{(0)}}
\newcommand{\none}{n^{(1)}}
\newcommand{\ndelta}{n^{(a)}}
\newcommand{\append}[2]{#1^{(#2)}}

\newcommand{\caseI}[1]{\medskip\noindent\textbf{Case #1:}}
\newcommand{\caseII}[1]{\textbf{Case #1:}}

\begin{document}
\title{State Complexity of Shifts of the Fibonacci Word}

\author{Delaram Moradi\inst{1}\thanks{Research funded by NSERC grant 2024-03725.}\and
Pierre Popoli\inst{1}
\and Jeffrey Shallit\inst{1}$^{*}$
\and
Ingrid Vukusic\inst{2}\thanks{Research funded by the Austrian Science Fund (FWF) 10.55776/J4850.}}

\authorrunning{D. Moradi et al.}

\institute{
School of Computer Science, University of Waterloo,\\
200 University Ave.\ W., Waterloo, ON N2L 3G1, Canada\\
\href{mailto:delaram.moradi@uwaterloo.ca}{\tt delaram.moradi@uwaterloo.ca}\\
\href{mailto:pierre.popoli@uwaterloo.ca}{\tt pierre.popoli@uwaterloo.ca}\\
\href{mailto:shallit@uwaterloo.ca}{\tt shallit@uwaterloo.ca}\\
\vphantom{space} 
\and
Department of Mathematics, University of York, Ian Wand
Building,
Deramore Lane,
York, YO10 5GH,
United Kingdom\\
\href{mailto:ingrid.vukusic@york.ac.uk}{\tt ingrid.vukusic@york.ac.uk}}

\maketitle

\begin{abstract}
The Fibonacci infinite word ${\bf f} = (f_i)_{i \geq 0} = 01001010\cdots$ is one of the most celebrated objects in combinatorics on words. There is a simple $5$-state automaton that, given $i$ in lsd-first Zeckendorf representation, computes its $i$'th term $f_i$, and a $2$-state automaton for msd-first. In this paper we consider the state complexity of the automaton generating the shifted sequence $(f_{i+c})_{i \geq 0}$, and show that it is $O(\log c)$ for both msd-first and lsd-first input.  This is close to the information-theoretic minimum for an aperiodic sequence. The techniques involve a mixture of state complexity techniques and Diophantine approximation. 
\end{abstract}

\section{Introduction}\label{sec:intro}

{\it State complexity} is one of the major themes of descriptional complexity.  We are given a regular language $L$, and we want to know what the size of the smallest finite automaton for $L$ is.  This topic has been studied for decades, starting with the fundamental papers of Maslov in the 1970's~\cite{Maslov:1970,Maslov:1973} and later popularized in the West by Yu, Zhuang, and Salomaa~\cite{Yu&Zhuang&Salomaa:1994}. 

Within state complexity, a major subtheme has explored bounds on the state complexity of various operations on regular languages, such as union, concatenation, and Kleene star.   Here we are given languages, say $L_1$ and $L_2$, accepted by automata of $m$ and $n$ states respectively, and want to know how large the minimal automata can be after applying some unary or binary operation to the languages \cite{Yu&Zhuang&Salomaa:1994}.

Cobham \cite{Cobham:1969,Cobham:1972} introduced the notion of {\it automatic sequence}: an automatic sequence $(a(i))_{i \geq 0}$ is a sequence over a finite alphabet $\Delta$ produced by a deterministic finite automaton with output (DFAO). We represent $i$ in some numeration system (e.g., base $k$ for $k \geq 2$) and feed the automaton with the digits $x$ of the representation. There is an output map $\tau$ from each state to $\Delta$, and $a(i)$ is then the value of $\tau$ applied to the last state reached on input $x$. There are two different variants: one where we read the representation starting with the most significant digit (called {\it msd-first}) and one starting with the least significant digit (called {\it lsd-first}). Although a sequence is automatic in the lsd-first sense if and only if it is automatic in the msd-first sense, the number of states required can be wildly different. More formally, a DFAO is a $6$-tuple $M=(Q,\Sigma,\delta,q_0,\Delta,\tau)$ where 
 $Q$ is a finite nonempty set of states;
$\Sigma$ is the input alphabet; 
$\delta:Q\times \Sigma \to Q$ is the transition function with domain extended to $Q\times \Sigma^*$ in the usual way;
$q_0$ is the initial state; 
$\Delta$ is the finite nonempty output alphabet; and
$\tau:Q\to \Delta$ is the output function. 
In this paper we assume that all DFAO's give the same correct output no matter how many leading zeros there are in the input (when reading the input in the msd-first sense), and the same thing must be true for
trailing zeros (in the lsd-first sense). 

We can, of course, combine the two topics, and study the state complexity of operations on automatic sequences, such as taking a linearly-indexed subsequence $(a(ci))_{i \geq 0}$, or shifting 
the sequence some number of terms
to get $(a(i+c))_{i \geq 0}$.  These topics were studied in \cite{Allouche&Shallit:2003}, and then more recently in a number of papers, such as
\cite{Zantema:2020,Charlier&Rampersad&Rigo&Waxweiler:2011,Zantema&Bosma:2022,Moradi&Rampersad&Shallit:2025}.

In this paper we focus on the operation of shifting a sequence  by some fixed number of terms $c$, to get a new sequence $(a(i+c))_{i \geq 0}$.  If $(a(i))_{i \geq 0}$ is generated by an $m$-state DFAO with lsd-first input in base $k$, we can construct a DFAO for its $c$-shift by using a deterministic transducer to compute the digit-by-digit sum $i+c$ on the fly, giving an automaton of $O(m \log c)$ states. However, this technique doesn't work for numbers in Zeckendorf representation (see Section~\ref{sec:zeck}), because there is no transducer that can do addition in one pass \cite{Labbe&Lepsova:2023}.  

In contrast with the lsd-first case in base $k$, the msd-first case can result in larger state complexity, because once again there is no deterministic transducer to compute the sum (consider adding $1$ to input $9999\cdots 9$ in base $10$).  
For example, for the celebrated Thue-Morse sequence ${\bf t} = (t(i))_{i \geq 0}$ \cite{Allouche&Shallit:1999}, it was recently shown that the msd-first state complexity of the DFAO for
$(t(i+c))_{i \geq 0}$ is $O(c)$, but exceeds $c^{0.694}$ infinitely often \cite{Moradi&Rampersad&Shallit:2025}.

However, in some cases, much lower complexity is seen for shifts of sequences in the msd-first case.  For example, if we consider the characteristic sequence ${\bf p}_2=(p_2(i))_i$ of the powers of $2$, defined by
$p_2 (i) = 1$ if $i$ is a power of $2$ and $0$ otherwise,
then it is possible to prove that the msd-first state complexity of the sequence $(p_2(i+c))_{i \geq 0}$ is $O(\log c)$ \cite{Moradi&Rampersad&Shallit:2025}.  This sort of logarithmic growth seems to be rather rare.  Indeed, logarithmic growth is close to the minimum possible blow-up for aperiodic shifted sequences, because a simple counting argument demonstrates that if an automatic sequence $(a(i))_{i \geq 0}$ is not eventually periodic, then its shifted sequence $(a(i+c))_{i \geq 0}$ must have state complexity exceeding $c' (\log c)/(\log \log c)$ infinitely often.\footnote{Specifically, the number of possible $n$-state DFAOs with input alphabet $\{0,1,\ldots k-1\}$ and output alphabet $\{0,1\}$ is at most
$n^{kn-1} 2^n$.  Either two different shifts are the same (which happens if and only if the sequence is eventually periodic), or there would have to be some shift  by $c \leq n^{kn-1} 2^n$ that could not be computed by a DFAO of $\leq n$ states.  Expressing $n$ in terms of $c$ now gives the bound.}
Thus the state complexity increase for shifts of ${\bf p}_2$ is quite close to the smallest conceivable blow-up.

In this paper we examine the state complexity of shifts of one particular famous aperiodic infinite word---the Fibonacci word ${\bf f} = f(0) f(1) f(2) \cdots = 01001010\cdots$. There are many equivalent ways to define $\bf f$, of which the most prominent seem to be the following: 
\begin{itemize} \setlength{\itemsep}{0em}
    \item[(a)]
It is the infinite fixed point, starting with $0$, of the morphism that sends $0$ to $01$ and
$1$ to $0$. 

\item[(b)] It is the infinite word arising from the limit of
$X_i$ as $i \rightarrow \infty$, where  $X_0 = 1$, $X_1 = 0$, and $X_i = X_{i-1} X_{i-2}$.

\item[(c)]  It is the Sturmian characteristic word of slope
$\gamma := 2-\phi$, where $\phi$ is the golden ratio $ (1+\sqrt{5})/2 \doteq 1.61803$. In other words, it is the word given by
$(\lfloor (i+2)\gamma \rfloor - \lfloor (i+1)\gamma \rfloor)_{i \geq 0}$.

\item[(d)] The $i$'th symbol of $\bf f$ is the last digit of the Zeckendorf representation of $i$. The Zeckendorf representation is recalled in~\cref{sec:zeck}.
\end{itemize}

There is yet a fifth way to define $\bf f$, which is not so well known, but will prove crucial to us in this paper.  It concerns fractional parts. Let $\{x\}$ denote the fractional part of $x$; that is, $x \bmod 1$.  Throughout the paper, we also use the  circular representation of intervals. If $\{x\} < \{y\}$, then the open interval $(x,y) \bmod 1$ is understood to mean $(\{ x \}, \{y\})$. And if $\{x\} > \{y\}$, then $(x,y) \bmod 1$ is understood to mean $ (\{ y \}, 1) \cup [0, \{x\} )$. Thus, for example, $(\pi,e) \bmod 1= (.14159\cdots, .71828\cdots)$ and $(e,\pi) \bmod 1 = (.71828\cdots, 1) \cup [0,.14159\cdots)$. Then we have the following fundamental lemma.

\begin{lemma} \label{lem:01-areas}
The $i$'th bit of ${\bf f}$, namely $f(i)$, is
$1$ if and only if 
$\{ i \phi  \} \in (-\phi, -2\phi) \bmod 1$.
\end{lemma}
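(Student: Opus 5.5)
The plan is to derive this directly from characterization (c), the Sturmian description $f(i) = \lfloor (i+2)\gamma \rfloor - \lfloor (i+1)\gamma \rfloor$ with $\gamma = 2-\phi$. We translate the condition ``the floor jumps'' into a condition on a fractional part, then rewrite that fractional part in terms of $\{i\phi\}$.

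First, since $0<\gamma<1$, the difference $\lfloor (i+1)\gamma + \gamma\rfloor - \lfloor (i+1)\gamma\rfloor$ is either $0$ or $1$. It equals $1$ exactly when $\{(i+1)\gamma\} + \gamma \ge 1$, that is, when $\{(i+1)\gamma\} \in [1-\gamma, 1)$. Now $1-\gamma = \phi - 1 = \{\phi\}$. Also $(i+1)\gamma = 2(i+1) - (i+1)\phi$, so $\{(i+1)\gamma\} = \{-i\phi - \phi\}$. Hence $f(i)=1$ if and only if $\{-x-\phi\} \in [\{\phi\},1)$, where $x = \{i\phi\}$.

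Second, we invert the map $x \mapsto \{-x-\phi\}$, which is an orientation-reversing bijection of the circle $\mathbb{R}/\mathbb{Z}$ onto itself. Put $y = \{-x-\phi\}$, so that $x \equiv -\phi - y \pmod 1$. As $y$ runs over the arc from $\{\phi\}$ up to $1$, the point $x$ runs backwards from $-2\phi$ to $-\phi - 1 \equiv -\phi$ modulo $1$. The image is therefore the arc with endpoints $\{-\phi\} \doteq 0.382$ and $\{-2\phi\} \doteq 0.764$. Since $\{-\phi\}<\{-2\phi\}$, in the paper's circular notation this is the interval $(-\phi,-2\phi) \bmod 1$, up to the status of its endpoints.

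Third, and this is the only delicate step, we handle the endpoints. The endpoint $y=\{\phi\}$ corresponds to $x \equiv -2\phi$. This would require $i\phi \equiv -2\phi \pmod 1$, i.e.\ $(i+2)\phi \in \mathbb{Z}$, which is impossible for $i \ge 0$ since $\phi$ is irrational. Likewise $x \equiv -\phi$ would need $(i+1)\phi\in\mathbb{Z}$, also impossible. So open versus closed endpoints is irrelevant, and the claimed open interval is correct. As a sanity check, $i=0$ gives $x=0\notin(0.382,0.764)$ and $f(0)=0$, while $i=1$ gives $x\doteq 0.618$ inside the interval and $f(1)=1$.
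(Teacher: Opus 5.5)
Your proof is correct and follows essentially the same route as the paper. Both start from the Sturmian description (c) with $\gamma = 2-\phi$, turn the jump of the floor into a condition on $\{(i+1)\gamma\}$, rewrite this as $\{-(i+1)\phi\}$, and pull the resulting arc back to a condition on $\{i\phi\}$; your explicit use of the irrationality of $\phi$ to dispose of the endpoints is slightly more careful than the paper, which writes open intervals throughout without comment.
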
 

\begin{proof}
Let $\gamma = 2-\phi$.  Then from Item (c) of the equivalent definitions of ${\bf f}$ we have
\begin{align*}
f(i) = 1 \iff  \lfloor (i+2) \gamma \rfloor - \lfloor (i+1) \gamma \rfloor = 1 & \iff \{(i+1) \gamma \} \in (1-\gamma, 1), \\
& \iff  \{ (i+1) (2-\phi)\} \in (\phi - 1, 1), \\
& \iff \{ -(i+1) \phi \} \in (\phi-1 ,1), \\
& \iff  \{ (i+1)\phi \} \in (0, 1-\phi), \\
& \iff  \{ i \phi \} \in (-\phi, -2\phi) \bmod 1.
\end{align*}
\end{proof}

The Fibonacci word $\bf f$ is celebrated and well-studied \cite{Berstel:1986b}, so understanding the state complexity of its shifts is a natural and significant problem.  Our main result is that if we shift $\bf f$ by $c$, then the resulting word can be generated by a DFAO of only $O(\log c)$ states.  Furthermore, this bound holds if the input is read in either msd-first or lsd-first formats. We will need the following lemma, that follows directly from~\cref{lem:01-areas}. 

\begin{lemma} \label{lemma:f(i+c)-ialpha}
    We have $f(i+c) = 1 \iff \fract{i\phi} \in (-(c+1)\phi, -(c+2)\phi)\bmod 1$.
\end{lemma}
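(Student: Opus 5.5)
The plan is to apply \cref{lem:01-areas} with $i$ replaced by $i+c$ and then translate the resulting circular interval by $-c\phi$ modulo $1$.

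First, \cref{lem:01-areas} gives $f(i+c)=1$ if and only if $\fract{(i+c)\phi} \in (-\phi,-2\phi) \bmod 1$. Since $\fract{(i+c)\phi} = \fract{\fract{i\phi} + c\phi}$, this condition is equivalent to $\fract{i\phi}$ lying in the circular interval obtained by shifting $(-\phi,-2\phi)\bmod 1$ by $-c\phi$. That shifted interval is $(-\phi-c\phi, -2\phi-c\phi) \bmod 1 = (-(c+1)\phi, -(c+2)\phi)\bmod 1$, which is exactly the statement of the lemma.

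The only point needing care is the claim that translation by a real number $t$ maps the circular interval $(x,y)\bmod 1$ bijectively onto $(x+t,y+t)\bmod 1$. Both descriptions denote the open arc on $\mathbb{R}/\mathbb{Z}$ going counterclockwise from the image of $x$ to the image of $y$. This holds in the wrap-around case as well, since the union $(\{y\},1)\cup[0,\{x\})$ is simply that arc written in $[0,1)$ coordinates. Rotation of the circle preserves such arcs, which handles both cases of the definition uniformly.

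Endpoint issues do not arise. Because $\phi$ is irrational, $\fract{i\phi}$ never equals $\fract{-(c+1)\phi}$ or $\fract{-(c+2)\phi}$, as that would require $(i+c+1)\phi$ or $(i+c+2)\phi$ to be an integer with $i+c+1$ or $i+c+2$ nonzero. So open versus closed endpoints never matter, and no step here is a real obstacle.
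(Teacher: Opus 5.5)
Your proof is correct and is essentially the paper's argument: apply \cref{lem:01-areas} to $i+c$ and translate the circular interval by $-c\phi$. Your rotation-invariance justification and the irrationality remark about endpoints make explicit what the paper leaves implicit. One small slip, inherited from the paper's displayed definition: in the wrap-around case $\fract{x}>\fract{y}$ the arc is $(\fract{x},1)\cup[0,\fract{y})$, not $(\fract{y},1)\cup[0,\fract{x})$, as the paper's own example $(e,\pi)\bmod 1$ confirms; your arc description is the right one, so the argument is unaffected.
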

\begin{proof}
    By~\cref{lem:01-areas} we have 
    \begin{align*}
        f(i+c) = 1
    	& \iff
    	\fract{(i+c)\phi} \in (-\phi, -2\phi)\bmod 1, \\
        & \iff
    	\fract{i\phi} \in (-(c+1)\phi,-(c+2)\phi)\bmod 1. 
    \end{align*}
\end{proof}

While our main result concerns the state complexity of certain automata, the proof techniques are also of interest.  We use a mixture of arguments from Diophantine approximation, state complexity, and logic.  In particular, some parts of the proof are completed using the automated theorem-prover {\tt Walnut} \cite{Mousavi:2021,Shallit:2023}, which can give rigorous proofs of results about automatic sequences merely by stating the desired results in first-order logic.  In this conference submission version, many details are relegated to the Appendix.  


\section{Zeckendorf representations and Fibonacci-automatic sequences}\label{sec:zeck}

To state the results and proofs, we need a classic numeration system sometimes called the Zeckendorf numeration system~\cite{Lekkerkerker:1952,Zeckendorf:1972}, although Ostrowski~\cite{Ostrowski:1922} had already published a much more general numeration system in 1922. In this system, natural numbers are represented as sums of distinct Fibonacci numbers $F_i$ for $i \geq 2$, where as usual we write $F_0 = 0$, $F_1 = 1$, and $F_i = F_{i-1} + F_{i-2}$ for $i \geq 2$. To ensure unique representation, we demand that a Fibonacci number can be used at most once and no two consecutive Fibonacci numbers be used.  Thus, for example, $10$ is written as $8+2$ and not $5+5$ or $5+3+2$.

We write a representation for $n$ as a binary word over the alphabet $\{\tO,\tL\}$, denoted by $(n)_F$, of the form
$a_t a_{t-1} \cdots a_2$ where
$ n = \sum_{2 \leq i \leq t} a_i F_{i}$.
Thus, for example, the
representation of $17$ is $(17)_F=\tL\tO\tO\tL\tO\tL$. Notice that the notation $(\cdot)_F$ relies on msd-first representations. In a similar fashion, for a binary word $x$ of length $\ell$ of the form $x=x_{\ell-1}\cdots x_0$, we let $[x]_F$ denote the corresponding integer value; i.e., $[x]_F=\sum_{i=0}^{\ell-1} x_{i}F_{i+2}$. Then, for every integer $n\geq 0$, we have $[(n)_F]_F=n$, and for every binary word $x$ containing no $\tL \tL$, we have $([x]_F)_F=x$. 
The requirement that no two consecutive Fibonacci numbers are used then translates to the requirement that a valid Zeckendorf representation must not contain two consecutive $\tL$'s.

A sequence $(a(i))_{i \geq 0}$ is said to be {\it Fibonacci-automatic} if there exists a DFAO that takes as input a valid Zeckendorf representation of $i$ and outputs $a(i)$.   The Fibonacci word {\bf f} is the canonical example of a Fibonacci-automatic sequence; it is generated by the automata in~\cref{fib}.
\setlength{\intextsep}{-20pt plus 2pt minus 4pt}
\setlength{\belowcaptionskip}{30pt}
\begin{figure}[htb]
    \centering
\includegraphics[width=2.5in]{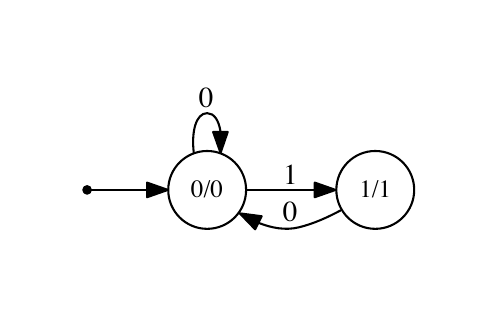}
\includegraphics[width=2.5in]{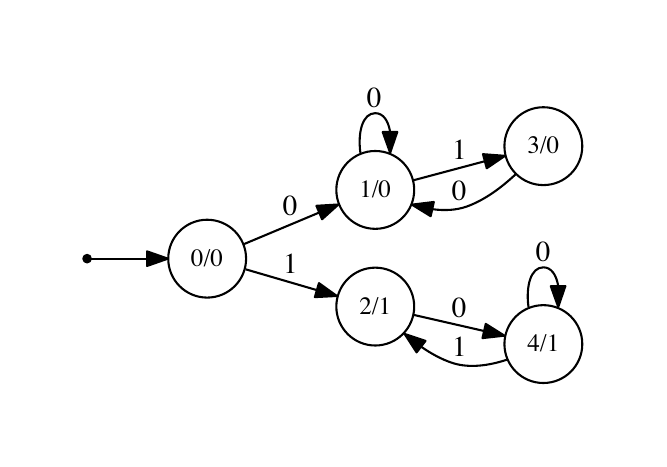}
    \vskip-.3in
    \caption{DFAOs for the Fibonacci word $\bf f$.  Msd-first on the left; lsd-first on the right.}
    \label{fib}
\end{figure}















Here, as usual, the label $a/b$ on a state means that $a$ is the state number and $b$ is the output associated with the state.
Notice that our convention is that inputs that form invalid Zeckendorf representations (i.e., those with two consecutive $1$'s) do not reach any state and do not produce any output.  

We will also need to represent negative numbers using Fibonacci numbers. Although many proposals exist for incorporating negative numbers into the Zeckendorf numeration system
\cite{Bunder:1992,Hajnal:2023,Shallit&Shan:2023,Labbe&Lepsova:2023}, we will need one that has not been discussed before.  It is, in fact, a special case of a more general representation by Rockett and Sz\"usz \cite{Rockett&Szusz:1992}.  Namely, to represent $-n$ for $n \geq 1$, we first find the smallest Fibonacci number $F_k \geq n$ and  write $n = F_k - m$.
We then express $m$ in the Zeckendorf system, say $\sum_{2 \leq i \leq t} a_i F_{i}$, and the representation of $-n$ is defined to be $(-F_k) + \sum_{2 \leq i \leq t} a_i F_{i}$. Notice that since $-F_{k}+F_{k-2}=-F_{k-1}$ for all $k\geq 3$, we have $t\leq k-3$. We can think of this representation, which we denote by $(-n)_F$, as a word over the alphabet $\{ -\tL, \tO, \tL\}$ where the leading digit is $-\tL$ and the remaining digits are either $\tO$ or $\tL$. The first few examples of these representations are given in~\cref{tab1}.
\setlength{\textfloatsep}{15pt plus 2pt minus 4pt}
\setlength{\intextsep}{15pt plus 2pt minus 4pt}
\setlength{\abovecaptionskip}{5pt}
\setlength{\belowcaptionskip}{15pt}
\begin{table}[H]
\centering
\begin{tabular}{c|c||c|c}
   $-n$ & $(-n)_F$ & $-n$ & $(-n)_F$ \\ \hline
   $-1$ & $(-\tL)$ & $-6$ & $(-\tL) \tO \tO \tL \tO$ \\
   $-2$ & $(-\tL) \tO$ &  $-7$ & $(-\tL) \tO \tO \tO \tL$ \\
   $-3$ & $(-\tL) \tO\tO$ & $-8$ & $(-\tL) \tO \tO \tO \tO$\\
   $-4$ & $(-\tL) \tO \tO \tL$ & $-9$ & $(-\tL) \tO \tO \tL \tO 1$ \\
   $-5$&  $(-\tL) \tO \tO \tO$ & $-1\tO$ & $(-1) \tO \tO \tL \tO \tO$ \\
\end{tabular}
\caption{Representation of the first few negative integers.}
\label{tab1}
\end{table}

\section{State complexity in the lsd-first case}\label{sec:lsd}

In this section we prove our first main result about the state complexity of the shifted sequence $(f(i+c))_i$ with lsd-first input. 
\begin{theorem} \label{theorem:sc-lsd}
    Let $(f(i))_{i \geq 0}$ be the Fibonacci word {\bf f}. For $c \geq 0$, there exists an lsd-first DFAO generating $(f(i+c))_{i \geq 0}$ with $O(\log c)$ states. 
\end{theorem}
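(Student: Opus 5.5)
The plan is to build the automaton directly from \cref{lemma:f(i+c)-ialpha}. By that lemma, $f(i+c)$ depends only on the position of $\fract{i\phi}$ relative to the two endpoints $e_1=\fract{-(c+1)\phi}$ and $e_2=\fract{-(c+2)\phi}$. Put $\eps_j := F_{j+2}\phi - F_{j+3}$, so that $\eps_j = \pm\phi^{-(j+2)}$ with alternating signs. If $i=[a_{\ell-1}\cdots a_0]_F$, then
\[
i\phi \equiv \sum_{j} a_j \eps_j \pmod 1 .
\]
Reading lsd-first, after $k$ digits the automaton has seen the partial sum $s_k=\sum_{j<k}a_j\eps_j$. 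The unread tail $t_k=\sum_{j\ge k}a_j\eps_j$ ranges, over all valid Zeckendorf continuations, within an explicit interval $J_k$ around $0$. This interval depends only on $k$ and on whether $a_{k-1}=\tL$, and has length $O(\phi^{-k})$. The endpoints are themselves of this form: writing $-(c+1)$ and $-(c+2)$ in the negative representation of \cref{sec:zeck}, say $-(c+1)=-F_{m}+\sum_{i\le m-3} b_iF_i$, we get $e_1 \equiv -\eps_{m-2}+\sum b_i\eps_{i-2}$ exactly, as a finite sum. Here $m=O(\log c)$. This exact finiteness is the reason the paper introduced this particular negative representation.

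The states at level $k$ are the values $s_k$ (together with the last digit read), merged according to the output they eventually force. A partial sum $s_k$ is \emph{decided} if $s_k+J_k$ lies entirely inside or entirely outside $(e_1,e_2)\bmod 1$. All decided states go to two absorbing sink states, accept and reject, with output $1$ or $0$. Absorbing sinks are consistent with the trailing-zero convention. Undecided states are those $s_k$ within distance $O(\phi^{-k})$ of $e_1$ or $e_2$. The values $s_k$ are exactly the points $\fract{n\phi}$ for $0\le n<F_{k+2}$. By the three-distance theorem (or the standard bound $\|n\phi\|\gg 1/n$), these points have minimal gap $\gg\phi^{-k}$. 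Hence only $O(1)$ of them lie in an $O(\phi^{-k})$-neighbourhood of either endpoint, and each level $k$ contributes $O(1)$ states. For $k\le m+O(1)$ this gives $O(\log c)$ states in total.

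The remaining task is to show that levels $k>m+O(1)$ add only $O(1)$ further states, and I expect this to be the main obstacle. Beyond level $m$ the endpoint expansions have no further digits. An undecided $s_k$ near $e_1$ then satisfies $s_k-e_1=\sum_{j<k}d_j\eps_j$ for a digit string $d$ that vanishes below index $k$ apart from a bounded window. The question ``is $s_k+t_k$ on the correct side of $e_1$?'' thus reduces to comparing the normalised tail $\phi^{k}t_k$ with the normalised offset $\phi^{k}(s_k-e_1)$. Because $\phi^{2}\eps_{j+2}=\eps_j$, this normalisation is shift-invariant up to sign parity. I would therefore show that the normalised offsets of undecided states lie in a finite set independent of $c$, using again the gap bound together with the fact that they are integer combinations of a bounded number of $\eps$'s. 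States with equal normalised offset, equal sign parity of $k$, and equal last digit can then be identified. This yields a finite ``tail'' automaton of constant size, which also handles both endpoints simultaneously (when $c$ is large they are separated, so they never interact at the same undecided state).

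Two further checks remain. First, $\fract{i\phi}$ never equals an endpoint, since $i+c+1\neq0$ and $\phi$ is irrational; hence the open/closed conventions cause no ambiguity and every state is eventually decided along any input. Second, I would verify the bounded-size tail component, and the small-$c$ cases, mechanically with {\tt Walnut}. Summing the contributions gives $O(m)+O(1)=O(\log c)$ states.
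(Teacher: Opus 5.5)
Your treatment of levels $k\le m+O(1)$ is fine. The gap bound for $\{\fract{n\phi}: 0\le n<F_{k+2}\}$ does the job that disjointness of the intervals $I_d(m)$ does in the paper, so these levels give $O(\log c)$ states.

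The gap is the step you flag yourself: nothing in the proposal shows that levels beyond $m$ contribute only $O(1)$ states. The claims that the normalised offsets of undecided states form a finite set independent of $c$, and that identifying states with equal offset, parity and last digit respects the transitions, are asserted without proof. The rest is deferred to an unspecified {\tt Walnut} computation. Your remark that every input is eventually decided does not fill this gap. The depth at which an input is decided is not bounded uniformly, and a tree of unmerged states of unbounded depth is infinite. As written, you have not shown the automaton is finite, let alone of size $O(\log c)$.

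The missing step already follows from your exact expansion $e_1\equiv-\eps_{m-2}+\sum b_i\eps_{i-2}$.
\begin{itemize}
\item Since $\eps_j=(-1)^{j+1}\phi^{-(j+2)}$ and $\sum_{i\ge 0}\phi^{-(m+1+2i)}=\phi^{-m}$, the number $-\eps_{m-2}=(-1)^m\phi^{-m}$ is exactly the supremum ($m$ even) or infimum ($m$ odd) of the tail range $J_{m-2}$ after a digit $\tO$.
\item This extreme is realised only by the infinite string with $\tL$ in positions $m-1,m+1,m+3,\ldots$. Hence $e_1$ is exactly an endpoint of the level-$(m-2)$ cell $s+J_{m-2}$, where $s=\sum b_i\eps_{i-2}$, padded with $\tO$ in positions $m-4,m-3$.
\item The cells of one level tile the circle (\cref{lem:lsd-partition}) and refine as $k$ grows. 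So $e_1$ lies in the interior of no cell at any level $\ge m-2$. The same holds for $e_2$ with its own index $m'$.
\item Extremal tails are never attained, so you may take $J_k$ open. Then every cell at level $\ge\max(m,m')-2$ is decided.
\end{itemize}
So there are no non-sink states at all beyond that depth: your ``tail automaton'' is empty. The count is $O(1)$ per level for $O(\log c)$ levels, plus sinks, which is exactly \cref{rem:c_in_partition} and the paper's argument.

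If you were using the closed tail range, a cell having $e_1$ as an endpoint fails your ``entirely inside'' test forever. That is likely the source of the apparent obstacle. The fix is the open/closed convention, not a merging argument.

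Separately, a single sink looping on both letters reads inputs containing $\tL\tL$, contrary to the paper's convention. Splitting each sink by last digit, as the paper does, fixes this at the cost of two states.
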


The proof of~\cref{theorem:sc-lsd} is delayed to the end of this section. We first introduce a partition of the interval $[0,1)$ that is induced by the lsd-first representation.

\begin{definition} \label{def:I_d(m)}
For each $m$ with $0\leq m < F_{d+1}$ let $I_d(m)$ be the interval defined by $I_d(m)= [L_d(m) \phi, R_d(m) \phi) \bmod 1$ where if $d$ is even we have
\[
	R_d(m) = - F_{d+1} + m 
	\quad \text{and} \quad
	L_d(m) = \begin{cases}
		-F_d + m, &\text{if } m < F_d;\\
		- F_{d+2} + m, &\text{if } m \geq F_d,
	\end{cases}
\]
and if $d$ is odd, the values of $R_d(m)$ and $L_d(m)$ are switched.  
\end{definition}

Let $d\geq 2$. For every integer $n$,  let $n^{[\leq d]}$
denote the integer obtained from $n$ by discarding all digits with index larger than $d$; i.e., if $n = \sum_{2\leq i \leq t} a_i F_i$, then $n^{[\leq d]} = \sum_{i=2}^{\min\{d,t\}} a_i F_i$. Similarly, we define $n^{[\geq d]} = \sum_{i=\max\{2,d\}}^{t} a_i F_i$. Then clearly we have $n = n^{[\leq d]} + n^{[\geq d+1]}$.  The following lemma states that the intervals $I_d(m)$ for $0\leq m < F_{d+1}$ form a partition of $[0,1)$. Its proof and all necessary details are given in Appendix~\ref{appendix:lsd}. 

\begin{lemma}\label{lem:lsd-partition}
Let $d\geq 2$. Then the intervals $I_d(m)$ for $0\leq m < F_{d+1}$ form a partition of $[0,1)$.
Moreover, for every nonnegative integer $n$ we have
\begin{equation}\label{eq:partition_char}
    \fract{n\phi}\in I_d(m)
    \iff n^{[\leq d]} = m.
\end{equation}
\end{lemma}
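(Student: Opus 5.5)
The plan has two parts. First, the geometric part: show that the intervals $I_d(m)$ tile the circle, using the structure of the points $\{-j\phi\}$. Second, the arithmetic part: show the implication $n^{[\leq d]}=m \Rightarrow \fract{n\phi}\in I_d(m)$ by estimating tails of Zeckendorf sums. The converse of \eqref{eq:partition_char} then comes for free: every $n$ has some value $m'=n^{[\leq d]}$ with $0\le m'<F_{d+1}$, so $\fract{n\phi}\in I_d(m')$, and disjointness forces $m=m'$ whenever $\fract{n\phi}\in I_d(m)$.

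The basic identity throughout is
\[
F_j\phi - F_{j+1} = -\psi^{j} = (-1)^{j+1}\phi^{-j}=:\eps_j, \qquad \psi=-1/\phi .
\]
Hence $F_j\phi \equiv \eps_j \pmod 1$ is a tiny number whose sign alternates with $j$.

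For the partition, take $d$ even; the odd case is symmetric because the roles of $L_d$ and $R_d$ are swapped. The right endpoints $R_d(m)\phi$ run over $j\phi$ for $j=-F_{d+1},\dots,-1$, which are $F_{d+1}$ consecutive multiples. The left endpoint is
\[
L_d(m)=R_d(m)+F_{d-1} \text{ if } m<F_d, \qquad L_d(m)=R_d(m)-F_d \text{ if } m\ge F_d .
\]
In both cases $L_d(m)$ is again one of the indices $-F_{d+1},\dots,-1$. By the three-distance theorem with $N=F_{d+1}$, which here degenerates to two gap lengths $\phi^{-d}$ and $\phi^{-(d-1)}$, the circular neighbour of $\{j\phi\}$ on a fixed side is $\{(j+F_{d-1})\phi\}$ if that index stays in range, and $\{(j-F_d)\phi\}$ otherwise. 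The displacements are $F_{d-1}\phi\equiv\eps_{d-1}$ and $-F_d\phi\equiv-\eps_d$, which for $d$ even are both negative. So each $I_d(m)$ is exactly one gap $[\,\text{neighbour},\,\text{point})$, and the gaps of $F_{d+1}$ points tile $[0,1)$ with the half-open convention.

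Rather than quoting the three-distance theorem, I could verify this directly. One checks that $m\mapsto R_d(m)$ and $m\mapsto L_d(m)$ are both bijections onto $\{-F_{d+1},\dots,-1\}$. One then checks that the map $R\mapsto L$ is a single cyclic permutation. The total length is then
\[
F_d\cdot\phi^{-(d-1)}+F_{d-1}\cdot\phi^{-d}=1,
\]
which combined with the cyclic structure gives the tiling.

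For the characterization, write $n=m+T$ with $T=n^{[\geq d+1]}=\sum_{j\ge d+1}a_jF_j$. Then
\[
\fract{n\phi}=\fract{m\phi+s}, \qquad s=\sum_{j\ge d+1}a_j\eps_j .
\]
The target interval, shifted by $-m\phi$, is $[(L_d(m)-m)\phi,\,(R_d(m)-m)\phi)$. This is $[-\eps_d,\,-\eps_{d+1})$ if $m<F_d$, and $[-\eps_{d+2},\,-\eps_{d+1})$ if $m\ge F_d$, read mod 1 as short arcs around $0$. So it suffices to prove two tail bounds on sums $s$ over admissible $0/1$ sequences with no two consecutive $1$'s and finitely many nonzero terms:
\[
-\eps_d \le s < -\eps_{d+1} \text{ in general}, \qquad -\eps_{d+2}\le s \text{ when } a_{d+1}=0 .
\]
The constraint $a_{d+1}=0$ is forced exactly when digit $d$ of $m$ is $1$, that is, when $m\ge F_d$ (since $m<F_{d+1}$). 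These bounds are the classical Ostrowski-type estimates. The extremal sums alternate $a_{d+1},a_{d+3},\dots$ or $a_{d+2},a_{d+4},\dots$, and the identities $\eps_j+\eps_{j+2}+\eps_{j+4}+\cdots=-\eps_{j-1}$ (from $\phi^{-j}/(1-\phi^{-2})=\phi^{-(j-1)}$) give the endpoints, approached but never attained by finite sums, except for the attained value $s=0$ when $T=0$.

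I expect this last step to be the main obstacle: getting the signs, the parity case distinction, and especially the half-open endpoints right. One must check that the left endpoint is attained or excluded consistently with the definition. This likely requires noting that $-\eps_d$ is a limit of finite sums of only one sign, and that $s=0$ lies strictly inside.
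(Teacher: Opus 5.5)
There is a genuine error, and it sits in the sign bookkeeping you yourself flagged. Your normalization $\eps_j=(-1)^{j+1}\phi^{-j}$ is correct. But for $d$ even it gives $\eps_{d-1}=+\phi^{-(d-1)}$ and $-\eps_d=+\phi^{-d}$, so both displacements are \emph{positive}, not negative. Hence $\fract{L_d(m)\phi}$ is the neighbour just \emph{above} $\fract{R_d(m)\phi}$. Since $[x,y)\bmod 1$ runs upward from $\fract{x}$ to $\fract{y}$ (this is how the paper's example $(e,\pi)\bmod 1$ and Lemma~\ref{lem:01-areas} read), $[L_d(m)\phi,R_d(m)\phi)\bmod 1$ is the \emph{complement} of a gap, of length $1-\phi^{-(d-1)}$ or $1-\phi^{-d}$.

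The arithmetic part has the same reversal. Your identity $\eps_j+\eps_{j+2}+\cdots=-\eps_{j-1}$ gives $-\eps_{d+1}<s<-\eps_d$ for $d$ even. What you need is $-\eps_d\le s<-\eps_{d+1}$, i.e.\ $\phi^{-d}\le s<-\phi^{-(d+1)}$, which no $s$ satisfies, not even $s=0$. Reading it ``as a short arc around $0$'' is exactly where the orientation gets flipped. For $d$ odd your bounds are right, but there Definition~\ref{def:I_d(m)} swaps $L_d$ and $R_d$, and the same mismatch reappears.

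This cannot be patched, because the statement as literally written is false. For every $d\ge 2$ we have $0^{[\leq d]}=0$. Yet $I_d(0)$ is $[\phi^{-d},1-\phi^{-(d+1)})$ for $d$ even and $[\phi^{-(d+1)},1-\phi^{-d})$ for $d$ odd, so $\fract{0\cdot\phi}=0\notin I_d(0)$. Moreover, for $d\ge 3$ every $I_d(m)$ has length exceeding $1/2$, so they cannot be disjoint.

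The source is the paper's \eqref{eq:FnPhi_fract}, which should read $\fract{F_i\phi}=\fract{(-1)^{i+1}\phi^{-i}}$ (e.g.\ $\fract{F_2\phi}=\phi^{-1}$, not $\phi^{-2}$). The paper's proof carries this sign through \eqref{eq:nphi_Ostr_small-large}, \eqref{eq:large_intervals}, \eqref{eq:small_intervals} and its length count, so it has the same parity swap. The true statement takes $I_d(m)=[R_d(m)\phi,L_d(m)\phi)\bmod 1$, i.e.\ ``even'' and ``odd'' interchanged in Definition~\ref{def:I_d(m)}. This is what Figure~\ref{fig:lsd-c=10} actually depicts: digit $\tL$ sits on $[-\phi,-2\phi)\bmod 1$.

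With that correction your plan goes through. For $d$ even the bounds to prove become $-\eps_{d+1}<s<-\eps_d$, together with $s<-\eps_{d+2}$ when $a_{d+1}=0$. Your route is then a legitimate alternative to the paper's:
\begin{itemize}
\item You get the tiling directly from the neighbour structure of the points $\fract{j\phi}$, $-F_{d+1}\le j\le -1$. Your total $F_d\phi^{-(d-1)}+F_{d-1}\phi^{-d}=1$ is precisely the total length of these gaps.
\item You then need the tail estimate only for one implication.
\item The paper instead derives covering from the tail estimate plus density of $\fract{n\phi}$, and disjointness from the total length.
\end{itemize}
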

Informally, the low-order Zeckendorf bits of an integer $n$ determine an interval where $n\phi$ lies, and the interval has length of order $\phi^{-d}$. 

\begin{remark}\label{rem:partition}
For every $d\geq 2$, each partition $\{I_{d+1}(m)\colon 0\leq m < F_{d+2}\}$ in~\cref{lem:lsd-partition} is a refinement of the partition $\{I_{d}(m)\colon 0\leq m < F_{d+1}\}$. This is clear from the construction of the intervals, as well as from \eqref{eq:partition_char}.
\end{remark}

\begin{remark}\label{rem:c_in_partition}
Let $c$ be a positive integer with $F_{k} < c \leq F_{k+1}$ for some $k\geq 2$ (or $c = 1$ and $k=2$). 
Set $m = F_{k+1} - c$ and note that $m < F_{k+1} - F_{k} = F_{k-1} < F_{k}$. Therefore,
we get that
\begin{align*}
    [L_{k}(m)\phi, R_{k}(m)\phi) \bmod 1 
    &= [ (- F_{k} + m)\phi, (-F_{k+1} + m)\phi )\bmod 1 \\
    &= [(-F_{k} + m)\phi, - c\phi ) \bmod 1,
\end{align*}
if $k$ is even. If $k$ is odd, we switch the endpoints of the interval.
In particular, the point $\fract{-c\phi}$ is an endpoint of some interval in the partition $\{I_{d}(m)\colon 0\leq m < F_{d+1}\}$ for all $d\geq k$. 
Moreover, one can check that by \cref{def:I_d(m)} the point $\fract{-c\phi}$ is not an endpoint in the partition for $d\leq k-1$.
\end{remark}

We are now ready to prove~\cref{theorem:sc-lsd}. 
\begin{proof}[of ~\cref{theorem:sc-lsd}]
We provide the construction of the automaton $M_c=(Q,  \Sigma, \delta, q_0, \Delta, \tau)$ for generating $f(i+c)$ with lsd-first input. Let $\Sigma=\{\tO,\tL\}$ and $\Delta = \{0,1\}$. 
We create the set of states $Q$ and the output function $\tau$ as follows. For $k\geq 0$, let $J_k=[-(k+1)\phi,-(k+2)\phi)\bmod 1$ and $\overline{J_k}=[-(k+2)\phi,-(k+1)\phi)\bmod 1$.
    
First, the initial state is denoted by $q_\varepsilon$.  Since the output associated with this state must match $f(c)$,~\cref{lem:01-areas} implies that $\tau(q_\varepsilon)=1$ if and only if $\fract{c\phi}\in J_0$.

Next, we add four states $A_{\tO},A_{\tL}$ and $R_{\tO},R_{\tL}$. Here $A_0$ and $A_1$ are sink states with output 1, and $R_0$ and $R_1$ are sink states with output 0; i.e., we have
\begin{align*}
    &\delta(A_{\tO},\tO)=A_{\tO}, \; \delta(A_{\tO},\tL)=A_{\tL}, \; \delta(A_{\tL},\tO)=A_{\tO}, \\ &\delta(R_{\tO},\tO)=R_{\tO}, \; \delta(R_{\tO},\tL)=R_{\tL}, \; \delta(R_{\tL},\tO)=R_{\tO}, \\
    &\tau(A_{\tO})=\tau(A_{\tL})=1, \; \tau(R_{\tO})=\tau(R_{\tL})=0. 
\end{align*} 
Notice that sink states are split in two due to the digit restrictions of the Zeckendorf numeration system; we are not permitted to simply self-loop on both $0$ and $1$ inputs.

The remaining states $q_x$ correspond to the state that the automaton reaches on input $x\in \{\tO,\tL\}^*$, given in lsd-first. 
Assume $F_{k}<c+2 \leq F_{k+1}$.  
Then from \cref{rem:c_in_partition} we know that both $\fract{-(c+1)\phi}$ and $\fract{-(c+2)\phi}$ are endpoints in the partition $\{I_k(m) \colon 0\leq m<F_{k+1}\}$.
Moreover, \cref{lem:lsd-partition} says that if we read the digits of an integer $n$ corresponding to $F_2, F_3, \ldots, F_k$, then we know which interval from the partition the point $\fract{n\phi}$ lies in. 
This means that we need to read the first $k-1$ digits.
After that we definitely know which interval from the partition $\{I_k(m) \colon 0\leq m<F_{k+1}\}$ the point $\fract{n\phi}$ lies in, and, in particular, whether $\fract{n\phi} \in J_c = [-(c+2)\phi,-(c+1)\phi)\bmod 1$ or not. 
Since by \cref{lemma:f(i+c)-ialpha} we have $f(n+c)=1 \iff \{n\phi\} \in J_c$, this means that the automaton reaches a sink state after reading at most $k-1$ digits. Namely, it reaches the state $A_{\tO}$ or $R_{\tO}$ (resp., $A_{\tL}$ or $R_{\tL}$) if the last input in lsd is $\tO$ (resp., $\tL$). 
Therefore, we only need to consider states which the automaton reaches on input $x$ with length $\leq k-1$.

To estimate the size of the resulting automaton, we count how many of these words $x$ of a given length $1\leq \ell \leq k-1$ cause the automaton to reach a state other than a sink state. 
First, we deal with the case $\ell=1$, where only two inputs are possible. We distinguish the following two cases:
\begin{itemize}

\item If $\fract{-(c+1)\phi}$ and $\fract{-(c+2)\phi}$ belong to $J_0$, then we add one state $q_\tO$. This case corresponds to $c$ such that $f(c-1)=f(c-2)=0$. The transitions from $q_\varepsilon$ are defined by $\delta(q_\varepsilon,\tO)=q_\tO$ and $\delta(q_\varepsilon,\tL)=A_\tL$ if $f(c+1)=1$ or $\delta(q_\varepsilon,\tL)=R_\tL$ if $f(c+1)=0$. 

\item Otherwise, we add two states $q_\tO$ and $q_\tL$, and transitions from $q_\varepsilon$ are defined by $\delta(q_\varepsilon,a)=q_a$ for $a\in \{\tO,\tL\}$. 
\end{itemize}
Now suppose that $2 \leq \ell \leq k-1$. 
For an input word $x\in \{\tO,\tL\}^\ell$,
let $m=[x]_F$ and $I_{\ell + 1}(m)$ be the interval defined in~\cref{def:I_d(m)}. 
We distinguish two cases, as follows: 
\begin{itemize}
\item If $I_{\ell + 1}(m) \subseteq J_c$, then by~\cref{lemma:f(i+c)-ialpha} the automaton $M_c$ reaches the sink state $A_i$ after reading $x$, where $i$ is the last letter of $x$. Similarly, if $I_{\ell + 1}(m) \subseteq \overline{J_c}$, then  $M_c$ reaches the sink state $R_i$ after reading $x$, where $i$ is the last letter of $x$.

\item Otherwise, exactly one of the two points $\{-(c+2)\phi\}$ or $\{-(c+1)\phi\}$ is included in $I_{\ell + 1}(m)$. In this case, we add a new state $q_x$ to $Q$, and the output of $q_x$ is given by $\tau(q_x) = 1$ if and only if $\{m\phi\} \in J_c$. Since the set $\{I_{\ell + 1}(m)\colon 0\leq m < F_{\ell + 2}\}$ is a partition by~\cref{lem:lsd-partition}, this case can occur at most twice for each length $\ell$. 
\end{itemize} 

Therefore, for each given length $1\leq \ell \leq k-1$, at most two non-sink states are added. The transition function on these states is defined by $\delta(q_x,a)=q_{xa}$.  So the total number of states in $Q$ is at most $1 + 4 + 2(k-1) = 2k + 3$, which is indeed $O(\log c)$, since $F_k < c+2 \leq F_{k+1}$. 


\end{proof}

\begin{example}
We illustrate the proof of the previous theorem with $c=10$. 
In the bottom line of the upper part of \cref{fig:lsd-c=10} one can see that $\fract{-12 \phi}$ and $\fract{-11\phi}$ are endpoints of the partition obtained after reading $5$ digits of input. 
Since $c+2 = 12\leq 13 = F_7$, Remark~\ref{rem:c_in_partition} guarantees that this happens indeed after reading $5$ digits.
The lower part of \cref{fig:lsd-c=10} illustrates the fact that we want to output $1$ if and only if $\fract{n\alpha}\in [-11\phi, -12\phi) \bmod 1$, which is the blue area in the top line. Intervals that are fully contained in $[-11\phi, -12\phi) \bmod 1$ are colored blue. The blue intervals signify that after having read the corresponding digits we can go to one of the sink states $A_{\tO},A_{\tL}$. Similarly, intervals that are fully contained in $[-12\phi, -11\phi) \bmod 1$ are colored orange, and are associated with the sink states $R_{\tO},R_{\tL}$. Intervals that contain one of the ``critical points'' $\fract{-11\phi}, \fract{-12\phi}$ are colored black. These intervals correspond to the ``intermediate states'' listed below.

\setlength{\belowcaptionskip}{0pt}
\begin{figure*}[ht]
\centering
\scalebox{1}{
\begin{tikzpicture}

\draw[thick] (0,0) -- (12,0);
\node at (0,0.5){\tiny{$\{-\phi\}$}};
\draw (0,0.15)-- (0,-0.15);
\draw[dashed] (0,0.15) -- (0,-5.15);
\node at (12,0.5){\tiny{$\{-\phi\}$}};
\draw (12,0.15)-- (12,-0.15);
\draw[dashed] (12,0.15) -- (12,-5.15);
\node at (4.58,0.5){\tiny{$\{-2\phi\}$}};
\draw (4.58,-0.15)-- (4.58,0.15);
\draw[dashed] (4.58,0.15) -- (4.58,-5.15);
\draw[<->] (0,0.15) to (4.58,0.15);
\node at (2.29,0.35){\footnotesize{$\tL$}};
\draw[<->] (12,0.15) to (4.58,0.15);
\node at (8.19,0.35){\footnotesize{$\tO$}};
\draw[thick] (0,-1.25) -- (12,-1.25);
\node at (9.16,-0.65){\tiny{$\{-3\phi\}$}};
\draw (9.16,-1.40)-- (9.16,-1.10);
\draw[dashed] (9.16,-1.40) -- (9.16,-5.15);
\draw[<->] (0,-1.10) to (4.58,-1.10);
\node at (2.29,-0.9){\footnotesize{$\tL\tO$}};
\draw[<->] (4.58,-1.10) to (9.16,-1.10);
\node at (6.87,-0.9){\footnotesize{$\tO\tO$}};
\draw[<->] (12,-1.10) to (9.16,-1.10);
\node at (10.58,-0.9){\footnotesize{$\tO\tL$}};
\draw[thick] (0,-2.5) -- (12,-2.5);
\node at (1.75,-2){\tiny{$\{-4\phi\}$}};
\draw (1.75,-2.65)-- (1.75,-2.35);
\draw[dashed] (1.75,-2.65) -- (1.75,-5.15);
\draw[<->] (0,-2.35) to (1.75,-2.35);
\node at (0.87,-2.15){\footnotesize{$\tL\tO\tL$}};
\draw[<->] (1.75,-2.35) to (4.58,-2.35);
\node at (3.17,-2.15){\footnotesize{$\tL\tO\tO$}};
\node at (6.33,-2){\tiny{$\{-5\phi\}$}};
\draw (6.33,-2.65)-- (6.33,-2.35);
\draw[dashed] (6.33,-2.65) -- (6.33,-5.15);
\draw[<->] (4.58,-2.35) to (6.33,-2.35);
\node at (5.45,-2.15){\footnotesize{$\tO\tO\tL$}};
\draw[<->] (6.33,-2.35) to (9.16,-2.35);
\node at (7.74,-2.15){\footnotesize{$\tO\tO\tO$}};
\draw[<->] (9.16,-2.35) to (12,-2.35);
\node at (10.58,-2.15){\footnotesize{$\tO\tL\tO$}};
\draw[thick] (0,-3.75) -- (12,-3.75);
\draw[<->] (0,-3.60) to (1.75,-3.60);
\node at (0.87,-3.40){\footnotesize{$\tL\tO\tL\tO$}};
\draw[<->] (4.58,-3.60) to (6.33,-3.60);
\node at (5.45,-3.40){\footnotesize{$\tO\tO\tL\tO$}};
\node at (10.91,-3){\tiny{$\{-6\phi\}$}};
\draw[dashed] (10.91,-3.60) -- (10.91,-5.15);
\draw[<->] (9.16,-3.60) to (10.91,-3.60);
\node at (10,-3.40){\footnotesize{$\tO\tL\tO\tO$}};
\draw[<->] (10.91,-3.60) to (12,-3.60);
\node at (11.5,-3.40){\footnotesize{$\tO\tL\tO\tL$}};
\node at (3.50,-3){\tiny{$\{-7\phi\}$}};
\draw[dashed] (3.50,-3.60) -- (3.50,-5.15);
\draw[<->] (1.75,-3.60) to (3.50,-3.60);
\node at (2.65,-3.40){\footnotesize{$\tL\tO\tO\tO$}};
\draw[<->] (3.50,-3.60) to (4.58,-3.60);
\node at (4.04,-3.40){\footnotesize{$\tL\tO\tO\tL$}};
\node at (8.08,-3){\tiny{$\{-8\phi\}$}};
\draw[dashed] (8.08,-3.60) -- (8.08,-5.15);
\draw[<->] (6.33,-3.60) to (8.08,-3.60);
\node at (7.20,-3.40){\footnotesize{$\tO\tO\tO\tO$}};
\draw[<->] (8.08,-3.60) to (9.16,-3.60);
\node at (8.66,-3.40){\footnotesize{$\tO\tO\tO\tL$}};
\draw[thick] (0,-5) -- (12,-5);
\draw[<->] (3.50,-4.85) to (4.58,-4.85);
\node at (4.05,-5.35){\footnotesize{$\tL\tO\tO\tL\tO$}};
\draw[<->] (8.08,-4.85) to (9.16,-4.85);
\node at (8.66,-5.65){\footnotesize{$\tO\tO\tO\tL\tO$}};
\draw[<->] (10.91,-4.85) to (12,-4.85);
\node at (11.5,-5.35){\footnotesize{$\tO\tL\tO\tL\tO$}};
\node at (0.66,-4.5){\tiny{$\{-9\phi\}$}};
\draw (0.66,-5.15)-- (0.66,-4.85);
\draw[<->] (0,-4.85) to (0.66,-4.85);
\draw[<->] (0.66,-4.85) to (1.75,-4.85);
\node at (0.33,-5.35){\footnotesize{$\tL\tO\tL\tO\tL$}};
\node at (1.3,-5.65){\footnotesize{$\tL\tO\tL\tO\tO$}};
\node at (5.25,-4.5){\tiny{$\{-10\phi\}$}};
\draw (5.25,-5.15)-- (5.25,-4.85);
\draw[<->] (4.58,-4.85) to (5.25,-4.85);
\draw[<->] (5.25,-4.85) to (6.33,-4.85);
\node at (4.9,-5.65){\footnotesize{$\tO\tO\tL\tO\tL$}};
\node at (5.8,-5.35){\footnotesize{$\tO\tO\tL\tO\tO$}};
\node at (9.83,-4.5){\tiny{$\{-11\phi\}$}};
\draw (9.83,-5.15)-- (9.83,-4.85);
\draw[<->] (9.16,-4.85) to (9.83,-4.85);
\draw[<->] (9.83,-4.85) to (10.91,-4.85);
\node at (9.65,-5.35){\footnotesize{$\tO\tL\tO\tO\tL$}};
\node at (10.55,-5.65){\footnotesize{$\tO\tL\tO\tO\tO$}};
\node at (2.41,-4.5){\tiny{$\{-12\phi\}$}};
\draw (2.41,-5.15)-- (2.41,-4.85);
\draw[<->] (1.75,-4.85) to (2.41,-4.85);
\draw[<->] (2.41,-4.85) to (3.50,-4.85);
\node at (3.1,-5.65){\footnotesize{$\tL\tO\tO\tO\tO$}};
\node at (2.2,-5.35){\footnotesize{$\tL\tO\tO\tO\tL$}};
\node at (7.00,-4.5){\tiny{$\{-13\phi\}$}};
\draw (7.00,-5.15)-- (7.00,-4.85);
\draw[<->] (6.33,-4.85) to (7.00,-4.85);
\draw[<->] (7.00,-4.85) to (8.08,-4.85);
\node at (6.7,-5.65){\footnotesize{$\tO\tO\tO\tO\tL$}};
\node at (7.65,-5.35){\footnotesize{$\tO\tO\tO\tO\tO$}};


\node at (6.7,-4.65-8){\textcolor{white}{\footnotesize{$\tO\tO\tO\tO\tL$}}};
\draw[thick] (-0.15,1-8) -- (12.15,1-8);
\node at (0,1.5-8){\tiny{$\{-\phi\}$}};
\draw (0,1.15-8)-- (0,0.85-8);
\node at (12,1.5-8){\tiny{$\{-\phi\}$}};
\draw (12,1.15-8)-- (12,0.85-8);
\node(a11) at (9.83,1.5-8){\tiny{$\{-11\phi\}$}};
\draw (9.83,1.15-8)-- (9.83,0.85-8);
\node(a12) at (2.41,1.5-8){\tiny{$\{-12\phi\}$}};
\draw (2.41,1.15-8)-- (2.41,0.85-8);
\draw[dashed] (0,1.15-8) -- (0,-4.15-8);
\draw[dashed] (12,1.15-8) -- (12,-4.15-8);
\draw[<->,orange,thick] (2.41,1.15-8) to (9.83,1.15-8);
\node[orange] at (6.12,1.35-8){\tiny{Output $0$}};
\draw[<->,cyan,thick] (0,1.15-8) to (2.41,1.15-8);
\node[cyan] at (1.2,1.35-8){\tiny{Output $1$}};
\draw[<->,cyan,thick] (9.83,1.15-8) to (12,1.15-8);
\node[cyan] at (10.92,1.35-8){\tiny{Output $1$}};
\draw[thick] (-0.15,0-8) -- (12.15,0-8);
\draw (4.58,-0.15-8)-- (4.58,0.15-8);
\draw[dashed] (4.58,0.15-8) -- (4.58,-4.15-8);
\draw[<->] (0,0.15-8) to (4.58,0.15-8);
\draw[<->] (12,0.15-8) to (4.58,0.15-8);
\draw[thick] (-0.15,-1-8) -- (12.15,-1-8);
\draw (9.16,-1.15-8)-- (9.16,-0.85-8);
\draw[dashed] (9.16,-1.15-8) -- (9.16,-4.15-8);
\draw[<->] (0,-0.85-8) to (4.58,-0.85-8);
\draw[<->,orange] (4.58,-0.85-8) to (9.16,-0.85-8);
\draw[<->] (12,-0.85-8) to (9.16,-0.85-8);
\draw[thick] (-0.15,-2-8) -- (12.15,-2-8);
\draw (1.75,-2.15-8)-- (1.75,-1.85-8);
\draw[dashed] (1.75,-2.15-8) -- (1.75,-4.15-8);
\draw[<->,cyan] (0,-1.85-8) to (1.75,-1.85-8);
\draw[<->] (1.75,-1.85-8) to (4.58,-1.85-8);
\draw[<->,orange] (4.58,-1.85-8) to (9.16,-1.85-8);
\draw[<->] (9.16,-1.85-8) to (12,-1.85-8);
\draw[thick] (-0.15,-3-8) -- (12.15,-3-8);
\draw[<->,cyan] (0,-2.85-8) to (1.75,-2.85-8);
\draw (10.91,-3.15-8)-- (10.91,-2.85-8);
\draw[dashed] (10.91,-3.15-8) -- (10.91,-4.15-8);
\draw[<->] (9.16,-2.85-8) to (10.91,-2.85-8);
\draw[<->,cyan] (10.91,-2.85-8) to (12,-2.85-8);
\draw (3.50,-3.15-8)-- (3.50,-2.85-8);
\draw[dashed] (3.50,-3.15-8) -- (3.50,-4.15-8);
\draw[<->] (1.75,-2.85-8) to (3.50,-2.85-8);
\draw[<->,orange] (3.50,-2.85-8) to (4.58,-2.85-8);
\draw[<->,orange] (4.58,-2.85-8) to (9.16,-2.85-8);
\draw[thick] (-0.15,-4-8) -- (12.15,-4-8);
\draw[<->,orange] (3.50,-3.85-8) to (4.58,-3.85-8);
\draw[<->,cyan] (10.91,-3.85-8) to (12,-3.85-8);
\draw[<->,cyan] (0,-3.85-8) to (1.75,-3.85-8);
\draw (9.83,-4.15-8)-- (9.83,-3.85-8);
\draw[<->,orange] (9.16,-3.85-8) to (9.83,-3.85-8);
\draw[<->,cyan] (9.83,-3.85-8) to (10.91,-3.85-8);
\draw (2.41,-4.15-8)-- (2.41,-3.85-8);
\draw[<->,cyan] (1.75,-3.85-8) to (2.41,-3.85-8);
\draw[<->,orange] (2.41,-3.85-8) to (3.50,-3.85-8);
\draw[<->,orange] (4.58,-3.85-8) to (9.16,-3.85-8);

\tikzset{cross/.style={cross out, draw=cyan, fill=none, minimum size=2*(#1-\pgflinewidth), inner sep=0pt, outer sep=0pt}, cross/.default={2pt}}

\draw node[cross] at (10.25,-0.85-8){};
\node[cyan] at (10.25,-0.65-8){\tiny{$\{2\phi\}$}};
\draw node[cross] at (10.25,-1.85-8){};
\node[cyan] at (10.25,-1.65-8){\tiny{$\{2\phi\}$}};
\draw node[cross] at (10.25,-2.85-8){};
\node[cyan] at (10.25,-2.65-8){\tiny{$\{2\phi\}$}};

\tikzset{cross/.style={cross out, draw=orange, fill=none, minimum size=2*(#1-\pgflinewidth), inner sep=0pt, outer sep=0pt}, cross/.default={2pt}}

\draw node[cross] at (2.83,0.15-8){};
\node[orange] at (2.83,0.35-8){\tiny{$\{\phi\}$}};
\draw node[cross] at (7.41,0.15-8){};
\node[orange] at (7.41,0.35-8){\tiny{$\{0\}$}};
\draw node[cross] at (2.83,-0.85-8){};
\node[orange] at (2.83,-0.65-8){\tiny{$\{\phi\}$}};
\draw node[cross] at (2.83,-1.85-8){};
\node[orange] at (2.83,-1.65-8){\tiny{$\{\phi\}$}};
\draw node[cross] at (2.83,-2.85-8){};
\node[orange] at (2.83,-2.65-8){\tiny{$\{\phi\}$}};

\end{tikzpicture}
}


\caption{Illustrations of the partition lemma and construction of the automaton for $c=10$.}
\label{fig:lsd-c=10}
\end{figure*}

Overall, the states are the following:
\begin{itemize} \setlength{\itemsep}{0em}
    \item Initial state $q_\varepsilon$, with output $0$, since $\fract{0\phi} \notin [-11\phi, -12 \phi)$. 
    \item Intermediate states:~$q_\tO,q_\tL,q_{\tO\tL},q_{\tL\tO},q_{\tL\tO\tO},q_{\tO\tL\tO},q_{\tL\tO\tO\tO},q_{\tO\tL\tO\tO}$. The output of each intermediate state can be computed explicitly, and can be seen in the lower part figure of \cref{fig:lsd-c=10} (via the orange/blue points $\fract{\phi}, \fract{0}, \fract{2\phi}$). Indeed, it suffices to compute the value of $\fract{[x^R]_F\phi}$ for each $q_x$, where $z^R$ is the reverse image of $z$, and check if it belongs to the interval $[-11\phi,-12\phi)\bmod 1$ to output $1$.
    \item Two pairs of sink states: $A_{\tO},A_{\tL}$ and $R_{\tO},R_{\tL}$.
\end{itemize}
The automaton is presented in~\cref{fig:lsd-automaton-c=10}.


\begin{figure}[ht]
\centering
\begin{tikzpicture}
\tikzstyle{every node}=[shape=circle,fill=none,draw=black,minimum size=35pt,inner sep=2pt]

\node(a) at (0,0) {$q_\varepsilon/0$};

\node(0) at (1,1.75) {\footnotesize{$q_\tO$}$/0$};
\node(1) at (1,-1.75) {\footnotesize{$q_\tL$}$/0$};

\node(01) at (3,1.75) {\footnotesize{$q_{\tO\tL}$}$/1$};
\node(10) at (3,-1.75) {\footnotesize{$q_{\tL\tO}$}/0};

\node(010) at (5,1.75) {\footnotesize{$q_{\tO\tL\tO}$}/1};
\node(100) at (5,-1.75) {\footnotesize{$q_{\tL\tO\tO}$}/0};

\node(0100) at (7,1.75) {\footnotesize{$q_{\tO\tL\tO\tO}$}/1};
\node(1000) at (7,-1.75) {\footnotesize{$q_{\tL\tO\tO\tO}$}/0};

\node(A0) at (6,0) {$A_{\tO}/1$};
\node(A1) at (4,0) {$A_{\tL}/1$};
\node(R0) at (9.5,1) {$R_{\tO}/0$};
\node(R1) at (9.5,-1) {$R_{\tL}/0$};

\tikzstyle{every node}=[shape=circle,fill=none,draw=none,minimum size=10pt,inner sep=2pt]
\node(a0) at (-1.5,0) {};

\tikzstyle{every path}=[color =black, line width = 0.5 pt]
\tikzstyle{every node}=[shape=circle,minimum size=5pt,inner sep=2pt]
\draw [->] (a0) to [] node [] {}  (a);

\draw [->] (a) to [] node [left] {$\tO$}  (0);
\draw [->] (a) to [] node [left] {$\tL$}  (1);

\draw [->] (0) to [bend left=33] node [above] {$\tO$}  (R0);
\draw [->] (0) to [] node [above] {$\tL$}  (01);

\draw [->] (1) to [] node [above] {$\tO$}  (10);

\draw [->] (01) to [] node [above] {$\tO$}  (010);

\draw [->] (10) to [] node [above] {$\tO$}  (100);
\draw [->] (10) to [] node [left] {$\tL$}  (A1);

\draw [->] (010) to [] node [above] {$\tO$}  (0100);
\draw [->] (010) to [] node [above] {$\tL$}  (A1);

\draw [->] (100) to [] node [above] {$\tO$}  (1000);
\draw [->] (100) to [bend right=38] node [below] {$\tL$}  (R1);

\draw [->] (1000) to [] node [below] {$\tO$}  (R0);
\draw [->] (1000) to [bend left=5] node [above] {$\tL$}  (A1);

\draw [->] (0100) to [] node [above left] {$\tO$}  (A0);
\draw [->] (0100) to [] node [above ] {$\tL$}  (R1);

\draw [->] (A0) to [loop right] node [right] {$\tO$}  ();

\draw [->] (A0) to [bend left=10] node [below] {$\tL$}  (A1);
\draw [->] (A1) to [bend left=10] node [above] {$\tO$}  (A0);

\draw [->] (R0) to [loop right] node [right] {$\tO$}  ();

\draw [->] (R0) to [bend left=10] node [right] {$\tL$}  (R1);
\draw [->] (R1) to [bend left=10] node [left] {$\tO$}  (R0);

;
\end{tikzpicture}
\caption{Lsd-first DFAO for $(f(i+10))_{i \geq 0}$.}
\label{fig:lsd-automaton-c=10}
\end{figure}


\end{example}

The next theorem provides an exact formula for the state complexity of the sequence $(f(i+c))_{i \geq 0}$. 
In particular, it implies that the upper bound in \cref{theorem:sc-lsd} is optimal.
The proof of~\cref{thm:lsd-first-optimality} is given in Appendix~\cref{appendix:lsd}.

\begin{theorem} \label{thm:lsd-first-optimality}
    Let $(f(i))_{i \geq 0}$ be the Fibonacci word {\bf f}. For $c \geq 5$, the number of states in the minimal lsd-first DFAO generating $(f(i+c))_{i \geq 0}$ is $2(|(c)_F|+1)+1-g(c)$, where $g(c)=1$ if $f(c-1)=f(c-2)=0$, and $g(c)=0$ otherwise. In particular, the function that maps $c$ to the state complexity of $(f(i+c))_{i \geq 0}$ is Fibonacci-regular. 
\end{theorem}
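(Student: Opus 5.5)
The plan is to show that the automaton $M_c$ built in the proof of \cref{theorem:sc-lsd} is already minimal, and then to count its states exactly. Two parts are needed. First, an exact count: the construction has $2(k-1)+1+4-g(c)$ states once we know precisely which lengths $\ell$ contribute two non-sink states. Second, a pairwise distinguishability argument: any two distinct reachable states of $M_c$ induce different output functions on suffixes.

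\textbf{Counting the states.} Set $d=|(c)_F|$, and relate $k$ (defined by $F_k<c+2\le F_{k+1}$) to $d$ (defined by $F_{d+1}\le c<F_{d+2}$). I expect $k=d+2$ generically. The boundary cases $c+1=F_{d+2}$ or $c+2=F_{d+2}$ must be checked separately, and there the count must still come out right after using \cref{rem:c_in_partition} for both $c+1$ and $c+2$.

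For each length $2\le \ell\le k-1$, I would show that exactly two intervals of the partition $\{I_{\ell+1}(m)\}$ meet a critical point in their interior. Equivalently, neither $\{-(c+1)\phi\}$ nor $\{-(c+2)\phi\}$ is an endpoint at level $\ell+1<k$, and the two points lie in different intervals. The first fact comes from the second half of \cref{rem:c_in_partition}, applied to $c+1$ and $c+2$ with their respective values of $k$.

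The separation claim is the delicate part. The two points are at circular distance $\{\phi\}$-related, namely $\|\phi\|$ apart. The interval lengths at level $\ell+1$ are of the form $\|F_j\phi\|$ with $j\ge \ell$. I would argue that a single interval containing both points would force $-(c+1)$ and $-(c+2)$ to share a low-digit truncation, which is impossible. If that argument resists, I would instead verify the claim with \texttt{Walnut}, by encoding ``$\{-(c+1)\phi\}$ and $\{-(c+2)\phi\}$ lie in the same $I_{\ell+1}(m)$'' via \eqref{eq:partition_char} as a first-order statement about $n^{[\le \ell+1]}$ for $n$ ranging over approximations.

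At length $1$, one state is saved exactly when $g(c)=1$, as in the proof of \cref{theorem:sc-lsd}. Summing gives $1+4+2(k-1)-g(c)$, which should equal $2(d+1)+1-g(c)$ after rewriting $k$ in terms of $d$. This count involves some slack on the sink side: when $c\ge5$ I need all four sinks to be reachable. I would verify that, and check that the identity survives the boundary cases.

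\textbf{Minimality.} Distinguish the states by suffixes.
\begin{itemize}
\item $A$-sinks and $R$-sinks differ by output.
\item $A_\tO$ and $A_\tL$ differ because only $A_\tO$ accepts a following $\tL$; the same holds for $R_\tO$ and $R_\tL$.
\item A non-sink state $q_x$ is not equivalent to a sink. Its interval contains a critical point in its interior, so by \cref{rem:partition} some extension lands in a blue region and another in an orange region, and both outputs occur.
\item Two non-sink states of different lengths $\ell<\ell'$ are distinguished by analysing when each first reaches a sink. A state of length $\ell$ needs exactly $k-1-\ell$ further digits on the critical branch, which is the content of the last sentence of \cref{rem:c_in_partition}.
\item Two states of the same length contain different critical points. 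The suffix that leads the one containing $\{-(c+1)\phi\}$ into a blue sink leads the other to output $0$.
\end{itemize}
The equal-length and different-length cases need care. If the direct argument gets messy, I would fall back on \texttt{Walnut}: express the Myhill–Nerode inequivalence as a first-order formula in $c$, which also yields the Fibonacci-regularity claim.

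\textbf{Regularity.} The formula is $2|(c)_F|+3-g(c)$. The length $|(c)_F|$ is Fibonacci-regular, and $g$ is Fibonacci-automatic because it depends only on $f(c-1)$ and $f(c-2)$. Hence the state-complexity function is Fibonacci-regular.

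I expect the main obstacle to be the same-length and different-length distinguishability, together with the two-critical-intervals-at-every-level claim near the boundary values of $c$.
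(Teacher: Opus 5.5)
Your strategy is the paper's: count the states of the automaton $M_c$ from the proof of \cref{theorem:sc-lsd}, then show they are pairwise inequivalent. But the claim that $M_c$ is \emph{already} minimal is false, so the distinguishability step cannot succeed for every $c$.

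Take your boundary case $c+1=F_{d+2}$. By \cref{rem:c_in_partition}, $\fract{-(c+1)\phi}$ becomes a partition endpoint one stage before $\fract{-(c+2)\phi}$. So the last critical strings $x_1,x_2$ satisfy $|x_2|=|x_1|+1$; these are the strings whose intervals contain the critical points and split into an output-$0$ and an output-$1$ sink. Consequently $M_c$ has one state more than the formula.

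\begin{itemize}
\item Both $q_{x_1}$ and $q_{x_2}$ send $\tO$ and $\tL$ directly to sinks.
\item Since $\fract{F_i\phi}=\fract{(-1)^i\phi^{-i}}$, appending $\tL$ rather than $\tO$ moves the point in opposite directions after strings of opposite parity.
\item The output-$1$ region $[-(c+1)\phi,-(c+2)\phi)\bmod 1$ lies to the right of one critical point and to the left of the other.
\end{itemize}
These two reversals cancel. So $q_{x_1}$ and $q_{x_2}$ have the same output and the same transitions, and must be merged; this merge is what makes the count match the formula. For example, $M_{12}$ has $13$ states, but the minimal DFAO has $12$, with $q_{\tO\tO\tO\tO}\equiv q_{\tO\tL\tO\tL\tO}$.

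Your different-length criterion (digits still needed to reach a sink) gives both of these states the value $1$. In this case it also fails to separate a state on the $(c+1)$-branch at length $\ell$ from one on the $(c+2)$-branch at length $\ell+1$. You are therefore missing three things: the case split, the merge, and a different separation argument for the remaining pairs. The paper obtains the latter from the shapes $\tO^*$ and $(\varepsilon|\tO)(\tL\tO)^*$ of the two critical strings.

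The state count is also wrong. In the generic case $F_k<c+1<c+2\le F_{k+1}$, the partition at length $\ell=k-1$ is $\{I_k(m)\}$, and both critical points are already endpoints there. So non-sink states occur only at lengths $1,\dots,k-2$, not up to $k-1$. Moreover $F_k\le c<F_{k+1}$, so $k=|(c)_F|+1$, not $|(c)_F|+2$. For instance, for $c=10$ we have $k=6$ and $|(10)_F|=5$, and the intermediate states of $M_{10}$ have length at most $4$.

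The correct total is $5+(2-g(c))+2(k-3)=2k+1-g(c)$. Your sum $1+4+2(k-1)-g(c)=2k+3-g(c)$ is too large by $2$. With your guess $k=d+2$ it becomes $2d+7-g(c)$ instead of $2d+3-g(c)$.

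Finally, your equal-length claim (``the suffix sending one state to a blue sink sends the other to output $0$'') is not justified as stated.
\begin{itemize}
\item When the remaining suffixes coincide, you need the parity argument above, now with equal parities, which is what makes the outputs differ.
\item Otherwise the other path may already sit in an output-$1$ sink, and you must choose the last digit according to that sink's output.
\end{itemize}
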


\section{State complexity in the msd-first case}\label{sec:msd}

We now turn to the same problem for the msd-first format. To determine a DFAO that generates the shifted Fibonacci word $(f(i+c))_i$ in msd-first format, one can make use of the construction of~\cref{theorem:sc-lsd} based on lsd-first format. To do so, we reverse all transitions in the automaton and determinize the resulting new automaton. This operation could potentially entail an exponential blowup of the number of states, and therefore we would only get a bound of $O(c^e)$ states for msd-first format, for some $e>0$. However, in this section, we prove the following much stronger theorem.
\begin{theorem}\label{thm:msd_automaton}
For all integers $c \geq 0$ there is a Fibonacci DFAO of $O(\log c)$ states that generates the shifted Fibonacci word $(f(i+c))_{i \geq 0}$ in msd-first format.
\end{theorem}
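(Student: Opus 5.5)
The approach is to exploit the self-similarity of $\{n\phi\}$ under appending digits, so that reading msd-first becomes an affine contraction on the circle, and then to bound the Myhill--Nerode classes by showing that only prefixes tied to the representations $(-(c+1))_F$ and $(-(c+2))_F$ from \cref{sec:zeck} have a nonconstant future.

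\textbf{Step 1: prefix dynamics.} Write $\psi=-1/\phi$. Since $F_j\phi=F_{j+1}-\psi^{j}$, for a valid word $x=x_{\ell-1}\cdots x_0$ we get
\[
\{[x]_F\phi\}=\Bigl\{-\sum_{i}x_i\psi^{i+2}\Bigr\}.
\]
Hence for a prefix $x$ followed by a suffix $y$ of length $r$ we have
\[
\{[xy]_F\phi\}=\{\psi^{r}p(x)+p(y)\},
\]
where $p(z):=-\sum_i z_i\psi^{i+2}$ is a real number, not reduced mod $1$. The prefix contributes a perturbation of size $O(\phi^{-r})$. By \cref{lemma:f(i+c)-ialpha}, the output on $xy$ is $1$ iff $p(y)+\psi^r p(x)$ lies in $J_c=[-(c+2)\phi,-(c+1)\phi)\bmod 1$.

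\textbf{Step 2: when the future depends on the prefix.} By Myhill--Nerode, the state reached on $x$ is determined by the last digit of $x$ together with the function $y\mapsto[\,p(y)+\psi^{|y|}p(x)\in J_c\,]$. This function differs from the one for the empty prefix only for suffixes $y$ with $\{[y]_F\phi\}$ within distance $O(\phi^{-|y|})$ of a critical point $\{-(c+1)\phi\}$ or $\{-(c+2)\phi\}$. Such $y$ satisfy $\|([y]_F+c+j)\phi\|=O(\phi^{-|y|})$ for some $j\in\{1,2\}$, with $[y]_F<F_{|y|+2}$. By the best-approximation property of $\phi$ (Fibonacci numbers are the only denominators achieving this), $[y]_F+c+j$ must equal a Fibonacci number $F_t$ up to a bounded correction. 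Equivalently, $y$ must be, up to boundedly many final digits, the tail of the negative representation $(-(c+j))_F=(-\tL)\,w$ defined in \cref{sec:zeck}, padded with zeros. I will make this precise through a lemma of the following form: if $\{[y]_F\phi\}$ is within $\kappa\phi^{-|y|}$ of $\{-(c+j)\phi\}$, then $y$ agrees with $0^{*}w$ except on its last $O(1)$ digits. The proof would combine the lsd partition \cref{lem:lsd-partition} and \cref{rem:c_in_partition} with the identity $-F_k+F_{k-2}=-F_{k-1}$.

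\textbf{Step 3: building the automaton.} The states will be: the four sink-like states as in \cref{theorem:sc-lsd}; an initial phase that reads the leading digits while tracking whether the input is still aligned with a suffix of $w_1$ or $w_2$ (the words for $c+1$ and $c+2$), which takes $O(|w_j|)=O(\log c)$ states; and a bounded number of states handling the $O(1)$ correction digits and the sign of $\psi^{r}p(x)$. The sign alternates with parity, so a parity bit suffices. Any prefix that diverges from both alignments has a future equal to that of the empty prefix up to the last digit. Such a prefix can therefore be merged into a constant number of ``generic'' states. These behave like the $2$-state msd automaton for $\bf f$ shifted appropriately; this part will be verified, possibly with Walnut for the finitely many boundary configurations. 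Counting gives $O(|(c)_F|)=O(\log c)$ states.

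\textbf{Main obstacle.} The hardest step is Step 2: proving rigorously, uniformly in $c$, that near-misses of the critical points at scale $\phi^{-r}$ occur only along the negative Zeckendorf representation of $c+1$ or $c+2$ with $O(1)$ exceptions. It must also handle the interaction when both critical points are close to each other and the leading-zero/validity constraints ($\tL\tL$ forbidden). I would first try to derive it from the three-distance structure of the partition $\{I_d(m)\}$ in \cref{lem:lsd-partition}, reading it backwards. If that becomes messy, I would fall back on an explicit continued-fraction (Ostrowski) bound $\|N\phi\|\ge 1/((\phi+2)N)$ unless $N$ is a Fibonacci number.
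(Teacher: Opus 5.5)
Your Step 1 and the first half of Step 2 are correct. The lemma you then formulate is false, however, and Step 3 would not follow from it even if it were true.

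\textbf{The Step 2 lemma.} The best-approximation argument only has force once $F_{|y|}$ is comparable to $c$. For shorter $y$, the condition $\|([y]_F+c+j)\phi\|=O(\phi^{-|y|})$ is satisfied by many integers near $c$ that are nowhere near a Fibonacci number. For long $y$ you do get $[y]_F+c+j=F_t$, but then $y=(F_t-(c+j))_F$ contains the block $F_{t-1}+F_{t-3}+\cdots$. For example, with $c+j=11$ and $t=13$, the word $y=\tL\tO\tL\tO\tL\tO\tO\tO\tO\tL\tO$ satisfies $\{[y]_F\phi\}=\{-11\phi\}+\phi^{-13}$. Yet it differs from $\tO^*\tO\tO\tO\tL\tO$ in a number of leading digits that grows with $t$.

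\textbf{Step 3.} The more serious problem is that Step 3 treats prefixes as if they were the critical suffixes. You claim that a prefix not aligned with a tail of $w_1$ or $w_2$ has the same future as the empty prefix, up to its last digit. This already fails on the empty suffix. For $c=10$ we have $w_1=\tO\tO\tO\tL\tO$ and $w_2=\tO\tO\tO\tO\tL$. The prefix $\tL\tO\tL\tO$ matches neither and ends in $\tO$, yet $f(7+10)=1\neq 0=f(10)$. So the states do not split into an ``aligned'' phase plus $O(1)$ generic states. In \cref{fig:msd-automaton-c=10} every state is reached by non-aligned prefixes, and the transitions cycle. There are not even sink-like states in msd-first order, because $\{[xy]_F\phi\}$ is dense in $[0,1)$ as $y$ varies.

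\textbf{What is missing.} Count thresholds in the $p$-coordinate, not aligned prefixes. By your Step 1, for a fixed valid suffix $y$, the output on $xy$ depends on $p(x)\in(-\phi^{-1},\phi^{-2})$ only through its position relative to at most two values $\theta$ in that range. These are the $\theta$ with $\psi^{|y|}\theta\equiv-([y]_F+c+j)\phi \pmod 1$, $j\in\{1,2\}$. Add $\theta=-\phi^{-3}$, which by \cref{lem:01-areas} records whether $x$ ends in $\tL$, and call the resulting set $T$. Prefixes whose $p$-values lie in the same cell of $T$ are Myhill--Nerode equivalent, so you only need $|T|=O(\log c)$. Two observations give this:
\begin{itemize}
\item For fixed length $\ell=|y|$, as $x$ varies the point $\{[xy]_F\phi\}$ stays in $I_{\ell+1}([y]_F)$. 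These intervals partition $[0,1)$ by \cref{lem:lsd-partition}, so each length contributes at most two thresholds.
\item Once $F_{\ell+1}\geq c+2$, the integer $N=[y]_F+c+j$ satisfies $N<F_{\ell+3}$ and $\|N\phi\|<\phi^{-\ell-1}$. Since $\|N\phi\|\geq\phi^{-i_0-1}$, where $i_0$ is the lowest Zeckendorf index of $N$, this forces $N\in\{F_{\ell+1},F_{\ell+2}\}$. Then $\theta\in\{\psi,\psi^2\}$, which are the endpoints of the range, so long suffixes add nothing.
\end{itemize}
Completed this way, your route genuinely differs from the paper's. The thresholds turn out to be exactly the endpoints $\{r_k\phi\}$ of $\cP(-(c+1))$ and $\cP(-(c+2))$. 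But Myhill--Nerode gives you the transitions for free, whereas the paper has to prove that its partition is closed under appending a digit (\cref{lem:ingrid}, via the Walnut-checked flipping lemmas).
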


The proof of~\cref{thm:msd_automaton} is delayed to the end of this section. As before, we first introduce a partition that is relevant for the msd-first representation. Let $n = \sum_{2 \leq i \leq t} a_i F_{i}$ be the Fibonacci representation of $n$. For $a \in \{0,1\}$ we write 
\begin{equation}\label{eq:append_delta_pos}
    n^{(a)} = \sum_{i=3}^{t+1} a_{i-1} F_{i} + a F_2.
\end{equation} 
In other words, the operation $n^{(a)}$ appends the bit $a$ at the right end of the Zeckendorf expansion of $n$. Note that if $a_2=1$, then $n^{(1)}$ is not a valid Fibonacci representation. For a negative number $m=-F_k + \sum_{2 \leq i \leq t} a_i F_{i}$, we similarly define
\begin{equation}\label{eq:append_delta_neg}
    m^{(a)}
    = - F_{k+1} + \sum_{i=3}^{t+1} a_{i-1} F_i + a F_2, 
\end{equation} and again the above representation might not be a valid representation if $a = 1$. 

\begin{table}[H]
\centering
\begin{tabular}{c|l|l|l}
$n$ & $(n)_F$ & $(n^{(0)})_F$ & $(n^{(1)})_F$ \\ \hline
$1$ & $\tL$ & $\tL\tO$ & \, - \\ 
$2$ & $\tL\tO$ & $\tL\tO\tO$ & $\tL\tO\tL$ \\ 
$3$ & $\tL\tO\tO$ & $\tL\tO\tO\tO$ & $\tL\tO\tO\tL$ \\
$4$ & $\tL\tO\tL$ & $\tL\tO\tL\tO$ & \, - \\
$5$ & $\tL\tO\tO\tO$ & $\tL\tO\tO\tO\tO$ & $\tL\tO\tO\tO\tL$ \\
$6$ & $\tL\tO\tO\tL$ & $\tL\tO\tO\tL\tO$ & \, -  \\
\end{tabular}
\caption{Examples of the $(\cdot)^{(a)}$ operation.}
\end{table}

We will also need to shift these representations (both positive and negative) one place to the right, discarding the rightmost bit in msd-first representation. We denote this operation with the symbol $'$, as in $n'$. Therefore, if $n=\sum_{2 \leq i \leq t} a_i F_{i}$, then $n'=\sum_{2 \leq i \leq t-1} a_{i+1} F_{i}$. Similarly, if $-n=(-F_k)+\sum_{2 \leq i \leq t} a_i F_{i}$, then $(-n)'=(-F_{k-1})+\sum_{2 \leq i \leq t-1} a_{i+1} F_{i}$. Note that for $m = -1 = -F_2$, this definition gives $(-1)' = -F_1=-1$. A brief table of this operation is in~\cref{tab2}.

\begin{table}[H]
\centering
\begin{tabular}{c|l||c|l||c|l}
$n$ & $(n)_F$ & $n'$ & $(n')_F$ & $(-n)'$ & $((-n)')_F$ \\
\hline
$1$ & $\tL$ & $0$ & $\tO$ & $-1$& $(-\tL)$  \\
$2$ & $\tL\tO$ & $1$ & $\tL$ & $-1$ & $(-\tL)$  \\
$3$ & $\tL\tO\tO$ & $2$ & $\tL\tO$ & $-2$ & $(-\tL) \tO$\\
$4$ & $\tL\tO\tL$ & $2$ & $\tL\tO$ & $-3$ & $(-\tL) \tO\tO$\\
$5$ & $\tL\tO\tO\tO$ & $3$ & $\tL\tO\tO$ & $-3$ & $(-\tL) \tO\tO$\\
$6$ & $\tL\tO\tO\tL$ & $3$ & $\tL\tO\tO$ & $-4$ & $(-\tL) \tO\tO\tL$\\
$7$ & $\tL\tO\tL\tO$ & $4$ & $\tL\tO\tL$ & $-5$ & $(-\tL) \tO\tO\tO$\\
$8$ & $\tL\tO\tO\tO\tO$ & $5$ & $\tL\tO\tO\tO$ & $-5$ & $(-\tL) \tO\tO\tO$\\
$9$ & $\tL\tO\tO\tO\tL$ & $5$ & $\tL\tO\tO\tO$ & $-6$ & $(-\tL) \tO\tO\tL\tO$\\
\end{tabular}
\caption{Examples of the ${}'$ operation.}
\label{tab2}
\end{table}

\begin{definition}
Let $\cP$ be a partition of the set $[0,1)$ into intervals modulo $1$.
We say that $\cP$ is \emph{consistent with the msd-first Fibonacci representation}
if for every interval $I\in \cP$ there exist two intervals $I_0, I_1 \in \cP$ such that for all nonnegative integers $n$ we have
\[
    \fract{n \phi} \in I 
    \implies 
    \begin{cases}
        \fract{\nzero \phi} \in I_0, &\text{if } f(n) = 1;\\
        \fract{\nzero \phi} \in I_0 \text{ and } \fract{\none \phi} \in I_1,
        &\text{otherwise}.
    \end{cases}
\]
\end{definition}

Note that in the above definition the condition $f(n) = 1$ is equivalent to the least significant digit in the Zeckendorf representation of $n$ being $1$; see Definition (d) of the Fibonacci word in Section~\ref{sec:intro}.

\begin{lemma} \label{lemma:partition-intersection-property}
    Let $\cP_A$ and $\cP_B$ be two partitions of $[0,1)$ that are consistent with the msd-first Fibonacci representation. Then the partition $\cP_{A,B}$ obtained from combining the endpoints of both $\cP_A$ and $\cP_B$ is also consistent with the msd-first Fibonacci representation.
\end{lemma}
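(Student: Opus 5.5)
The plan is to describe the common refinement explicitly and check the defining condition directly. Every interval of $\cP_{A,B}$ is a nonempty intersection $I = I^A \cap I^B$ with $I^A \in \cP_A$ and $I^B \in \cP_B$. Conversely, every nonempty such intersection is a single interval of $\cP_{A,B}$. This is because the endpoints of $\cP_{A,B}$ are exactly the union of the endpoint sets, so no endpoint of either partition lies in the interior of a cell of $\cP_{A,B}$. Some care is needed with the circular convention mod $1$: two arcs can in principle intersect in two pieces. That cannot happen here, since each cell of the refinement is a connected arc that avoids every endpoint of $\cP_A$ and $\cP_B$, so it lies in exactly one cell of each. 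The cleaner way to phrase this is therefore to start from a cell $I$ of $\cP_{A,B}$ and note that it is contained in a unique $I^A\in\cP_A$ and a unique $I^B\in\cP_B$.

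Given that cell $I \subseteq I^A\cap I^B$, consistency of $\cP_A$ supplies cells $I^A_0, I^A_1\in\cP_A$, and consistency of $\cP_B$ supplies $I^B_0, I^B_1\in\cP_B$. Take any $n\ge 0$ with $\fract{n\phi}\in I$. Then $\fract{\nzero\phi}\in I^A_0\cap I^B_0$, and when $f(n)=0$ also $\fract{\none\phi}\in I^A_1\cap I^B_1$. The point $\fract{\nzero\phi}$ lies in exactly one cell of $\cP_{A,B}$, and that cell is contained in $I^A_0\cap I^B_0$. The difficulty is to choose $I_0$ independently of $n$.

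This independence is the main obstacle. The intersection $I^A_0\cap I^B_0$ might a priori split into several cells of $\cP_{A,B}$ if it contained an endpoint of the refinement in its interior. It cannot: an endpoint of $\cP_{A,B}$ is an endpoint of $\cP_A$ or of $\cP_B$, and none of those lies in the interior of $I^A_0$ or of $I^B_0$ respectively. So $I^A_0\cap I^B_0$ is a single cell of $\cP_{A,B}$, provided it is nonempty and connected. Connectedness needs the same circular check as before. If the two arcs together covered the whole circle and overlapped at both ends, every endpoint of each would lie inside the other, which would contradict the fact that endpoints never fall in the interior of a cell of the other partition's refinement. So I would argue it via the endpoint sets, not via raw arc arithmetic.

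Thus set $I_0 := I^A_0\cap I^B_0$ and, when applicable, $I_1 := I^A_1\cap I^B_1$. If some intersection is empty, the hypothesis is vacuous for that case and any cell will do; in fact it is nonempty whenever some $n$ exists with $\fract{n\phi}\in I$. These choices verify the definition, and the lemma follows.
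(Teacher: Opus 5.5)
\textbf{There is a gap at exactly the step you flag as the main obstacle: your argument that $I^A_0\cap I^B_0$ is connected is invalid.} Your overall route matches the paper's. You refine the partitions, place each cell $I$ of $\cP_{A,B}$ inside some $I^A\cap I^B$, and set $I_a:=I^A_a\cap I^B_a$. The paper simply asserts that $I_{A,a}\cap I_{B,a}\in\cP_{A,B}$. You are right that on the circle this needs proof, but the ``contradiction'' you invoke is not one. Endpoints of $\cP_A$ are excluded only from interiors of cells of $\cP_A$ and of $\cP_{A,B}$. But $I^B_0$ is a cell of $\cP_B$, and an endpoint of $\cP_A$ lying inside a cell of $\cP_B$ is the normal situation. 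Your connectedness step never uses consistency, and without it the claim is false. If $\cP_A$ has endpoints $0,0.6$ and $\cP_B$ has endpoints $0.1,0.5$, then $[0,0.6)\cap\bigl([0.5,1)\cup[0,0.1)\bigr)=[0,0.1)\cup[0.5,0.6)$, which is two cells of $\cP_{A,B}$. The same example refutes your opening ``conversely'' sentence. So the independence of $I_0$ from $n$, which you correctly identify as the crux, remains unproved.

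\textbf{How to repair it.} The missing ingredient is a common endpoint. If two proper arcs meet in two components, their endpoints strictly interleave, and then every point of the circle lies in the interior of one of the two arcs. Hence, if $\cP_A$ and $\cP_B$ share an endpoint $p$, which is interior to no cell of either, then $I^A_0\cap I^B_0$ is a single arc. Your endpoint argument then makes it a single cell of $\cP_{A,B}$ (a one-cell partition is harmless). This suffices for the use in Theorem~\ref{thm:msd_automaton}. There $\cP(-(c+1))$ and $\cP(-(c+2))$ both have $\fract{-\phi}$ (from $r_{R-2}=-1$) and $\fract{-2\phi}$ as endpoints.

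For the lemma in its stated generality you must bring in the digit-appending maps themselves. With $t=\fract{n\phi}$ one has $\nzero=\lfloor (n+1)\phi\rfloor-1$. Hence $\fract{\nzero\phi}=\fract{-\fract{t+\phi}/\phi-\phi}$ and $\fract{\none\phi}=\fract{\nzero\phi+\phi}$. Both are affine of slope $-1/\phi$ on the circle cut at $\fract{-\phi}$. Using this contraction and the density of the points $\fract{n\phi}$, one can show that every consistent partition with at least two cells has $\fract{-\phi}$ as an endpoint. That supplies the common endpoint in general.
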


\begin{proof}
    Consider an interval $I \in \cP_{A,B}$. Then $I$ can be written as the intersection of some intervals $I_A \in \cP_A$ and $I_B \in \cP_B$, even if both endpoints of $I$ are from a single partition; in other words, $I = I_A \cap I_B$. Therefore, by assumption there are intervals $I_{A, a} \in \cP_A, I_{B, a} \in \cP_B$ for $a \in \{0, 1\}$ such that for all $n$ where $\fract{n\phi} \in I$, we have $\{\ndelta \phi\} \in I_{A, a} \cap I_{B, a} \in \cP_{A,B}$. Thus, the new partition is consistent with the msd-first Fibonacci representation.
\end{proof}

\begin{definition} \label{def:msd-partition}
    Let $r\leq -2$ be a negative integer with representation $r = - F_R + \sum_{i=2}^{R-3} b_i F_i$. We define the partition $\cP(r)$ of $[0,1)$ as follows. First, set $r_0 = r$ and recursively define $r_{k} = r_{k-1}'$ for $1 \leq k \leq R-2$, or equivalently
\[
    r_k = -F_{R-k} + \sum_{i=2}^{R-3-k} b_{i+k} F_{i},
    \quad 0\leq k \leq R-2.
\] Then we define $\cP(r)$ to be the partition of the set $[0,1)$ into intervals modulo $1$, where the endpoints are exactly the points $\fract{r_k \phi}$ for $0 \leq k \leq R-2$. Note that we always have $r_{R-2} = -1$, $r_{R-3} = -2$, and $r_{R-4} = -3$ if $r \leq -3$.

\end{definition}

\begin{example}
   We illustrate the partition $\cP(r)$ with the example $r = -14 = -F_8 + F_5 + F_3$:
\begin{align*}
    r_0 &= -F_8 + F_5 + F_3 = -14,  && \fract{-14\phi}=0.347 \cdots, \\
    r_1 &= -F_7 + F_4 + F_2 = - 9, && \fract{-9\phi}=0.437 \cdots,\\
    r_2 &= -F_6 + F_3 = - 6, && \fract{-6\phi}=0.291 \cdots, \\
    r_3 &= -F_5 + F_2 = - 4, && \fract{-4\phi}=0.527 \cdots, \\
    r_4 &= -F_4 = - 3, && \fract{-3\phi}=0.145 \cdots, \\
    r_5 &= -F_3 = - 2, && \fract{-2\phi}=0.763 \cdots, \\
    r_6 &= -F_2 = - 1, && \fract{-\phi}=0.381 \cdots.
\end{align*} 
Then the partition $\cP(-14)$ is composed of the  following $7$ intervals:
\begin{align*}
&[-3\phi,-6\phi) \bmod 1,\ [-6\phi,-14\phi)\bmod 1,\ [-14\phi,-\phi)\bmod 1, 
[-\phi,-9\phi)\bmod 1, \\ 
&[-9\phi,-4\phi)\bmod 1,\
[-4\phi,-2\phi)\bmod 1,\ [-2\phi,-3\phi)\bmod 1.
\end{align*}
\end{example}

Now we state our main lemma of this section. Its proof is given in Appendix~\ref{appendix:msd}. 
\begin{lemma}\label{lem:ingrid}
    Let $r\leq -2$ be a negative integer and $\cP(r)$ the partition from \cref{def:msd-partition}.
    Then $\cP(r)$ is consistent with the msd-first Fibonacci representation. 
\end{lemma}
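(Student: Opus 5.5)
The approach is to lift the fractional parts $\fract{n\phi}$ to genuine real numbers on which appending a digit acts as an affine map. Let $\psi=-1/\phi$. By Binet, $F_i\phi-F_{i+1}=-\psi^{i}$, so for $n=\sum a_iF_i$ (positive or in the signed form with leading digit $-\tL$) we get $n\phi\equiv x_n \pmod 1$, where $x_n:=-\sum a_i\psi^{i}$. This definition is valid as a real number and is bounded. Comparing with \eqref{eq:append_delta_pos} and \eqref{eq:append_delta_neg} gives the key identity
\[
x_{\ndelta}=\psi\,x_n-a\psi^{2}=:T_a(x_n),\qquad a\in\{0,1\},
\]
for nonnegative $n$ and for the negative representations alike. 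Each $T_a$ is an orientation-reversing contraction by factor $\phi^{-1}$. The first step is to fix a window of length $1$ containing all lifts $x_n$ of valid representations, so that reduction mod $1$ is a bijection there. The interval images then need care only where they straddle $0$ in the circular convention; I would track this explicitly, noting that $0=\fract{0\cdot\phi}$ is itself not an endpoint of $\cP(r)$.

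Next, by Lemma~\ref{lem:01-areas}, $f(n)=1$ exactly when $\fract{n\phi}$ lies in the arc $(-\phi,-2\phi)\bmod 1$. Since $r_{R-2}=-1$ and $r_{R-3}=-2$ are endpoints of $\cP(r)$, every $I\in\cP(r)$ lies entirely on one side. So for each $I$, either only $T_0$ must be controlled (when $f(n)=1$), or both $T_0$ and $T_1$ must be. Consistency then amounts to the following: for the relevant $a$, the set $T_a(I)$ (read mod $1$) contains no endpoint $\fract{r_j\phi}$ in its interior. Once this holds, $T_a(I)$ lies in a single interval $I_a\in\cP(r)$, because the image of an interval under an affine map is an interval.

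To verify that no endpoint falls inside, I pull endpoints back. Write $r_{j}=r_{j+1}^{(b)}$, where $b=b_{j+2}$ is the digit of $r$ discarded at that step, so that $x_{r_j}=T_b(x_{r_{j+1}})$. Then $T_a^{-1}(x_{r_j})$ equals $x_{r_{j+1}}$ when $a=b$, and this is an endpoint, hence not interior to $I$. When $a\neq b$ it equals $x_{r_{j+1}}\pm\psi$, which mod $1$ is the point $\fract{(r_{j+1}\pm1)\phi}$ shifted appropriately. For $j=R-2$ (the point $-1$, with $(-1)'=-1$) the same computation applies with $r_{j+1}=-1$.

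The main obstacle is showing that these ``wrong-digit'' preimages never land in the interior of an interval $I$ that is being mapped by that $T_a$. The case $a=1$ helps, since it is only needed when $I$ lies in the $f=0$ region. Otherwise I would compare the point $x_{r_{j+1}}\pm\psi$ with the consecutive endpoints around it using the three-distance structure of the points $\fract{-m\phi}$. Concretely, the gaps in $\cP(r)$ near $\fract{r_{j+1}\phi}$ have length at most about $\phi^{-(R-j-1)}$, and the offending point sits at distance comparable to $\phi^{-1}$ in the lifted coordinates. It must therefore lie outside the forbidden region (in the $f(n)=1$ part, or beyond the image window), and this is precisely where the Zeckendorf condition that $r$ has no $\tL\tL$ enters. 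If the general inequalities become too fiddly, my fallback is to express the statement ``for all $r$, consecutive endpoints of $\cP(r)$ satisfy the consistency condition'' in first-order logic over Fibonacci-automatic predicates, using the known automaton comparing $\fract{m\phi}$ and $\fract{m'\phi}$, and to verify it with {\tt Walnut}.
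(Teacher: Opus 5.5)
Your framework is correct, and it takes a genuinely different route from the paper. The paper proves two flipping lemmas with {\tt Walnut}: appending $a$ sends $(x\phi,y\phi)\bmod 1$ into $(\append{y}{a}\phi,\append{x}{a}\phi)\bmod 1$, and dropping the last digit of a negative integer reverses this. It then argues by contradiction: an endpoint $\fract{r_j\phi}$ inside $I_a$ would force $\fract{r_{j+1}\phi}$ inside $I$, with separate cases when an endpoint of $I$ is $-1$ or $-2$. Your lift $x_n=-\sum a_i\psi^i$ with $x_{\ndelta}=T_a(x_n)$ makes the append lemma immediate and replaces the drop lemma by $T_a^{-1}$. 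Your correct-digit case $a=b$ (including $-1=\append{(-1)}{1}$) is right. One simplification: use the window $W=[-\phi^{-1},\phi^{-2}]$. Its cut point is $\fract{-\phi}=\fract{r_{R-2}\phi}$, an endpoint of $\cP(r)$. So every $I\in\cP(r)$ lifts to a genuine subinterval $\tilde I\subseteq W$, $T_a(W)\subseteq W$, and there is no wrap-around at $0$ to track.

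\textbf{The gap.} It is exactly the step you flag as the main obstacle, the wrong-digit case $a\neq b$. Your heuristic does not prove it. Small gaps of $\cP(r)$ near $\fract{r_{j+1}\phi}$, plus a shift of size $\phi^{-1}$, say nothing about whether $x_{r_{j+1}}\pm\psi$ lands in the interior of some other, possibly long, interval $\tilde I$. The three-distance structure is beside the point; what you need is a statement about location, not distance.

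\textbf{How to close it.} The sets $T_0(W)=[-\phi^{-3},\phi^{-2}]$ and $T_1T_0(W)=[-\phi^{-1},-\phi^{-3}]$ are the lifts of the closed arcs on which $f=0$ and $f=1$.
\begin{itemize}
\item Images: $T_0(\tilde I)\subseteq T_0(W)$ always, and $T_1$ is only needed when $\tilde I\subseteq T_0(W)$. So $T_a(\tilde I)$ lies in the closed $f{=}a$ arc.
\item Endpoints: if $r_j$ has last digit $b$, then $x_{r_j}=T_b(x_{r_{j+1}})$. When $b=1$, the absence of $\tL\tL$ in $r$ (together with the two $\tO$'s after the leading $-\tL$) forces $r_{j+1}$ to end in $\tO$. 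Hence $x_{r_{j+1}}\in T_0(W)$ and $x_{r_j}\in T_1T_0(W)$. So $\fract{r_j\phi}$ lies in the closed $f{=}b$ arc.
\item The two closed arcs meet only at $\fract{-\phi}$ and $\fract{-2\phi}$. These lie on the boundary of each arc, so they are never interior to $T_a(I)$.
\end{itemize}
In your pullback coordinates this reads: $x_{r_{j+1}}+\phi^{-1}\geq\phi^{-3}\cdot(-1)+\phi^{-1}=\phi^{-2}$ for $(a,b)=(0,1)$, which is outside the open window. For $(a,b)=(1,0)$, $x_{r_{j+1}}-\phi^{-1}\leq\phi^{-2}-\phi^{-1}=-\phi^{-3}$, which is in the $f{=}1$ part while $\tilde I\subseteq T_0(W)$.

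With this two-line argument inserted, your proof is complete and avoids {\tt Walnut} entirely, which is a real gain over the paper's machine-checked flipping lemmas. The paper uses the same ``last digit determines the arc'' fact only once, in its final Case~3.
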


We can now turn to the proof of~\cref{thm:msd_automaton}. 
\begin{proof}[of~\cref{thm:msd_automaton}]
Consider the partition $\cP$ constructed by combining the endpoints of the two partitions $\cP(-(c+1))$ and $\cP(-(c+2))$ from \cref{def:msd-partition}. Therefore, by~\cref{lem:ingrid} and~\cref{lemma:partition-intersection-property}, the partition $\cP$ is consistent with the msd-first Fibonacci representation. Among the intervals in $\cP$, let $J$ be the one that includes $0$. We define the DFAO $M_c=(Q,\Sigma,\delta,q_0,\Delta,\tau)$ as follows: 
\begin{itemize} \setlength{\itemsep}{0em}
    \item $Q = \cP$, \; $\Sigma = \{\tO,\tL\}$, \; $\Delta = \{0,1\}$, \; $q_0=J$,
    \item $\delta(I, a) = I_a$ where $I_a \in \cP$ is the interval that $\{\ndelta \phi\}$ falls in for every $n$ with $\{n\phi\} \in I$, 
    \item $\tau(I) = 
        \begin{cases}
            1, & \text{if $I \subseteq (-(c+1)\phi, -(c+2)\phi)\bmod 1$}; \\
            0, & \text{otherwise.}
        \end{cases}$. 
\end{itemize}
The correctness of the $\delta$ function is due to the fact that $\cP$ is consistent with the msd-first Fibonacci representation. The correctness of the $\tau$ function is due to~\cref{lemma:f(i+c)-ialpha} and the fact that by the construction provided in~\cref{def:msd-partition}, both $\{-(c+1)\alpha\}$ and $\{-(c+2)\alpha\}$ are endpoints of $\cP$.

Thus, based on the construction of $\cP(-(c+1))$ and $\cP(-(c+2))$, there are $O(\log c)$ endpoints in $\cP$, and therefore $O(\log c)$ intervals in $\cP$. This proves that there are $O(\log c)$ states in $M_c$.
\end{proof}

\begin{example}
    As an example, consider $c = 10$.  The automaton depicted in~\cref{fig:msd-automaton-c=10} illustrates our construction and the corresponding intervals. Since $-12=-13+1=-F_7+F_2$, the endpoints in the partition $\cP(-12)$ correspond to the integers $\{-12,-8,-5,-3,-2,-1\}$; and similarly, since $-11=-13+2=-F_7+F_3$, the partition $\cP(-11)$ corresponds to the integers $\{-11,-7,-5,-3,-2,-1\}$. Thus, we obtain the following partition $\cP$ of eight intervals:
\begin{align*}
\cP=\big\{ &[-5\phi, -8\phi)\bmod 1,\ [-8\phi,-3\phi)\bmod 1,\  [-3\phi,-11\phi)\bmod 1,\ [-11\phi,-\phi)\bmod 1  \\ &[-\phi,-12\phi)\bmod 1,\ [-12\phi,-7\phi)\bmod 1,\ [-7\phi,-2\phi)\bmod 1,\ [-2\phi,-5\phi)\bmod 1  \big\},
\end{align*}
and the construction in the proof of~\cref{thm:msd_automaton} provides the automaton in~\cref{fig:msd-automaton-c=10}.

\begin{figure}[!ht]
\centering
\scalebox{1}{
\begin{tikzpicture}
\tikzstyle{every node}=[shape=rectangle,fill=none,draw=black,minimum size=30pt,inner sep=2pt]

\node(58) at (0,0) {{\scriptsize{$[-5\phi,-8\phi)\bmod 1$}}$/0$};

\node(127) at (0,-2) {{\scriptsize{$[-12\phi,-7\phi)\bmod 1$}}$/0$};

\node(111) at (2,-4) {{\scriptsize{$[-11\phi,-\phi)\bmod 1$}}$/1$};

\node(25) at (4,-2) {{\scriptsize{$[-2\phi,-5\phi)\bmod 1$}}$/0$};

\node(112) at (10,-4) {{\scriptsize{$[-\phi,-12\phi)\bmod 1$}}$/1$};

\node(83) at (4,1.5) {{\scriptsize{$[-8\phi,-3\phi)\bmod 1$}}$/0$};

\node(72) at (6,0) {{\scriptsize{$[-7\phi,-2\phi)\bmod 1$}}$/0$};

\node(311) at (8,-2) {{\scriptsize{$[-3\phi,-11\phi)\bmod 1$}}$/0$};

\tikzstyle{every node}=[shape=circle,fill=none,draw=none,minimum size=10pt,inner sep=2pt]
\node(a0) at (-2,0) {};

\tikzstyle{every path}=[color =black, line width = 0.5 pt]
\tikzstyle{every node}=[shape=circle,minimum size=5pt,inner sep=2pt]
\draw [->] (a0) to [] node [] {}  (58);

\draw [->] (58) to [loop above] node [above] {$\tO$}  ();
\draw [->] (58) to [] node [left] {$\tL$}  (127);

\draw [->] (127) to [] node [left] {$\tO$}  (111);

\draw [->] (111) to [] node [above left] {$\tO$}  (25);
\draw [->] (111) to [bend left=5] node [above] {$\tL$}  (112);

\draw [->] (25) to [] node [left] {$\tO$}  (83);
\draw [->] (25) to [] node [left] {$\tL$}  (72);

\draw [->] (112) to [bend left=5] node [below] {$\tO$}  (111);

\draw [->] (83) to [] node [above] {$\tO$}  (58);
\draw [->] (83) to [bend left=50] node [right] {$\tL$}  (112);

\draw [->] (72) to [] node [right] {$\tO$}  (311);

\draw [->] (311) to [] node [below] {$\tO$}  (25);
\draw [->] (311) to [] node [right] {$\tL$}  (112);
;
\end{tikzpicture}
}
\caption{Msd-first DFAO for $(f(i+10))_{i \geq 0}$.}
\label{fig:msd-automaton-c=10}
\end{figure}
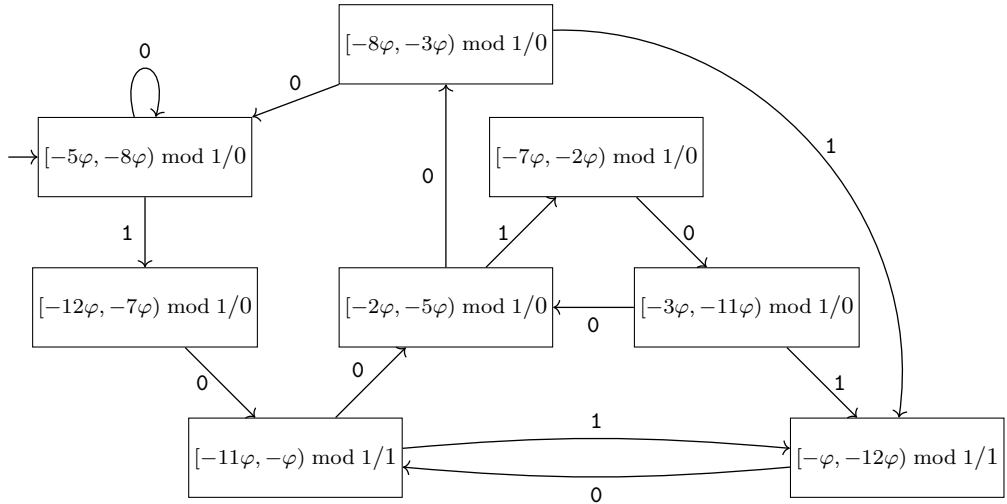
\end{example}

\section{Final words}

Examining the state complexity of transformations of Fibonacci-automatic sequences poses challenging questions at the border of formal language theory, combinatorics on words, and number theory.  There are many other questions we can examine.

It has not escaped our attention that there are obvious generalizations of these results, in some cases to quadratic irrationals, and in other cases to irrational numbers more generally.  Similarly, one can ask the same kinds of questions about the Tribonacci word (fixed point of the morphism $0 \rightarrow 01$, $1 \rightarrow 02$, $2 \rightarrow 0$), where the analysis is more puzzling.  We hope to take up these questions in a future paper.

\newpage
\appendix

\section{Lsd-first details} \label{appendix:lsd}

\subsection{Diophantine approximation and proof of~\cref{lem:lsd-partition}}

We first recall some basic properties of the Fibonacci numbers and the golden ratio. From the Binet formula for the Fibonacci numbers, one can deduce (or see, e.g., \cite{Knott:2021} for a reference)
that
\begin{equation}\label{eq:FnPhi_fract}
    \fract{F_i \phi}
    = \fract{(-1)^i \phi^{-i}}.
\end{equation}
Let $d\geq 2$. For every integer $n = \sum_{2\leq i \leq t} a_i F_i$, recall that $n^{[\leq d]} = \sum_{i=2}^{\min\{d,t\}} a_i F_i$, and $n^{[\geq d]} = \sum_{i=\max\{2,d\}}^{t} a_i F_i$. Since $n = n^{[\leq d]} + n^{[\geq d+1]}$, and from Eq.~\eqref{eq:FnPhi_fract}, we obtain 
\begin{equation}\label{eq:nphi_Ostr_small-large}
    \fract{n\phi}
    = \left(n^{[\leq d]} \phi
        + \sum_{i=d+1}^{t} (-1)^i a_i  \phi^{-i} \right) \bmod 1.
\end{equation}
We want to estimate the term $\sum_{i=d+1}^{t} (-1)^i a_i \phi^{-i}$; it can be understood as an error term.
Note that we get negative terms for odd indices $i$, and positive terms for even indices $i$.
The following simple identity, and the bound that follows from it, together show that we can indeed estimate the error term in the obvious way.
We have
\begin{equation}\label{eq:phi_sum}
    \sum_{i = 0}^\infty \phi^{-(d+1+2i)}
    = \phi^{-d}
    = \fract{(-1)^d F_d\phi},
\end{equation}
and, in particular,
\begin{align}\label{eq:phi_1-bound}
    \sum_{i = 0}^\infty \phi^{-(d+1+2i)}
    + \sum_{i = 0}^\infty \phi^{-(d+2+2i)}
    \leq \phi^{-2} + \phi^{-3} = \phi^{-1} < 1.
\end{align}
Thus, we get from Eqs.~\eqref{eq:nphi_Ostr_small-large}, \eqref{eq:phi_sum}, and \eqref{eq:phi_1-bound} that \begin{equation}\label{eq:large_intervals}
    \fract{n\phi} 
    \in \begin{cases}
        \left( {n^{[\leq d]}\phi} - \phi^{-d}, {n^{[\leq d]}\phi} + \phi^{-(d+1)} \right) \bmod 1,
        &\text{if $d$ is even};\\
        \left( {n^{[\leq d]}\phi} - \phi^{-(d+1)}, {n^{[\leq d]}\phi} + \phi^{-d} \right) \bmod 1,
        &\text{if $d$ is odd}.
    \end{cases}
\end{equation}
Moreover, if $a_d = 1$, then $a_{d+1} = 0$, so actually $\phi^{-(d+1)}$ cannot show up in the error term. Note that by the greedy property of the Zeckendorf representation, we have $a_d = 1$ if and only if $n^{[\leq d]}\geq F_d$. Thus, if $n^{[\leq d]}\geq F_d$, we have in fact
\begin{equation}\label{eq:small_intervals}
        \fract{n\phi} 
    \in \begin{cases}
        \left( (n^{[\leq d]} - F_{d+2})\phi, \, (n^{[\leq d]} - F_{d+1})\phi \right) \bmod 1,
        &\text{if $d$ is even};\\
        \left((n^{[\leq d]} - F_{d+1})\phi, \, (n^{[\leq d]} -F_{d+2})\phi\right) \bmod 1,
        &\text{if $d$ is odd}.
    \end{cases}
\end{equation}

We are now ready to prove~\cref{lem:lsd-partition}.

\begin{proof}[of \cref{lem:lsd-partition}]
From \eqref{eq:large_intervals} and \eqref{eq:small_intervals} and the fact that the points $\fract{n\phi}$ are dense in $(0,1)$, it is clear that the half open intervals $I_d(m)$, for $0\leq m \leq F_{d+1}-1$, cover $[0,1)$.

It remains to show that the intervals are disjoint. 
We do this simply by computing their total length.
By construction, we have exactly $F_d$ intervals of length 
\[
    \fract{(-1)^d(-F_{d+1} + F_{d})\phi} 
    = \fract{(-1)^{d-1}F_{d-1}\phi} 
    = \phi^{-(d-1)},
\]
and $F_{d+1} - F_{d} = F_{d-1}$ intervals of length 
\[
    \fract{(-1)^d(-F_{d+1} + F_{d+2})\phi} 
    = \fract{(-1)^{d} F_{d} \phi }
    = \phi^{-d}.
\]
Thus, the total length equals $F_d \phi^{-(d-1)} + F_{d-1}\phi^{-d} = 1$, via the Binet formula (or see, e.g., \cite{Knott:2021}). Therefore, the intervals must be disjoint, and form a partition. Also, \eqref{eq:partition_char} clearly holds as well by the arguments above.
\end{proof}

\subsection{Proof of \cref{thm:lsd-first-optimality}}

\begin{proof}[of \cref{thm:lsd-first-optimality}]
Let $c\geq 5$. We show that the construction in the proof of \cref{theorem:sc-lsd} is almost optimal. That is, we show that the constructed automaton $M_c=(Q,\Sigma,\delta,q_0,\Delta,\tau)$ is, possibly after one modification, minimal, and we count the number of states in the set $Q$.
Recall that $Q$ contains \begin{itemize}
    \item the initial state $q_\varepsilon$,
    \item the four sink states $A_{\tO},A_{\tL}, R_{\tO},R_{\tL}$,
    \item a certain number of intermediate states $q_x$, where the strings $x$ correspond to certain intervals. 
\end{itemize}
To be more precise about the intermediate states, we have that the strings $x$ correspond to pairs of integers $(d,m)$ such that $\fract{-(c+1)\phi} \in I_d(m)$ or $\fract{-(c+2)\phi} \in I_d(m)$, and $m = [x^R]_F$, where $z^R$ is the reverse image of $z$, and $x$ might be padded with $\tO$'s so that $|x| = d-1$. We distinguish the two following cases.

\caseI{1} There exists some $k\geq 1$ such that $F_k < c+1<c+2 \leq F_{k+1}$. Then by \cref{rem:c_in_partition} we know that $\fract{-(c+1)\phi}, \fract{-(c+2)\phi}$ become endpoints of the partition $\{I_{d}(m)\colon 0\leq m < F_{d+1}\}$ exactly at stage $d = k$.
This means that we only add states at the stages $d=2,3,\ldots k-1$.
Recall that in the construction of the automaton, at the first stage ($d=2)$, i.e., after reading one digit, we add either one or two states. In fact, we add only one state if and only if $f(c-1)= f(c-2) = 0$. Since by definition $g(c) = 1$ in this case, and $g(c) = 0$ otherwise, we add exactly $2-g(c)$ states at stage $d = 2$.
At all other stages ($d = 3, \ldots, k-1$), we add two states. Thus, we overall get $5 + (2 - g(c)) + 2(k-3) = 2k + 1 - g(c)$ states.
Since $F_k \leq c < F_{k+1}$, we have $|(c)_F| = k-1$, and the number of states is indeed $2(|(c)_F|+1)+1-g(c)$, as desired. 

It is clear that all states are reachable. Therefore, to prove minimality, we only need to prove that no two states are equivalent.
That is, we need to show that for all states $q\neq q' \in Q$ there exist strings $u,u' \in \Sigma^*$ such that $\tau(\delta(q,u)) \neq \tau(\delta(q',u))$.

Let $m_1, m_2$ be the unique integers such that $\fract{-(c+1)\phi} \in I_{k-1}(m_1)$ and $\fract{-(c+2)\phi} \in I_{k-1}(m_2)$. These intervals are ``critical intervals'' in the sense that at the next stage they each split into two subintervals which are each fully contained in $[-\phi, -2\phi) \bmod 1$ or in $[-2\phi, -\phi)\bmod 1$. Let $x_1, x_2$ be the corresponding ``critical strings'' with $|x_1| = |x_2| = k-2$ and $[x_1^R]_F = m_1$, $[x_2^R]_F = m_2$. 
Then, by construction, the strings $x_1\tO$ and $x_1\tL$ must be valid inputs and must lead to distinct outputs, and the same holds for the strings $x_2\tO$ and $x_2\tL$.

We now prove that no two states are equivalent. First, it is clear that none of the sink states $A_{\tO},A_{\tL}, R_{\tO},R_{\tL}$ can be equivalent to any other state. 

Now let $q = q_x$ and $q' = q_{x'}$ be two intermediate states corresponding to the (possibly empty) strings $x \neq x'$. By construction of $Q$, the strings $x, x'$ are each prefixes of $x_1$ or $x_2$. 
Let us write $x y = x_i$ and $x'y' = x_j$ for some (possibly empty) words $y,y'$, and $i,j \in \{1,2\}$ (where we might have $i=j$).
Assume without loss of generality that $|x|\leq |x'|$, and so $|y|\geq |y'|$ as $|x_1|=|x_2|$. We now distinguish two cases, according to whether $y=y'$ or not. 

\caseII{a} Assume that $y \neq y'$. Then we know that after reading $x' y$ we reach a sink state. Let $b$ be the output of that sink state, i.e., $\tau(\delta(q_\eps, x'y)) = b$ and for every digit $a\in \{\tO,\tL\}$, we have $\tau(\delta(q_\eps, x'ya)) = b$.
Since $x_i=xy$ is a critical string, we can choose a digit $c \in \{\tO,\tL\}$ such that $\tau(\delta(q_\eps, xyc)) = 1- b$. Thus, $\tau(\delta(q_{x'},yc)) = b$ and $ \tau(\delta(q_{x},yc))=1-b$, and so the two states  $q = q_x$ and $q' = q_{x'}$ are not equivalent.

\caseII{b} Assume that $y = y'$, i.e., $x_i = xy$ and $x_j = x'y$. Since $x\neq x'$, we have $i\neq j$, which means that we have states corresponding to prefixes of the two distinct critical strings $x_1,x_2$. 
By Eq.~\eqref{eq:FnPhi_fract}, the parity of $|x_1|= |x_2|$ determines whether reading $\tO$ after having read $x_1$ or $x_2$ means going to the subinterval to the left of the corresponding critical point, or to the right. 
However, for one of the two critical points, going to the left means that we output $0$, whereas for the other critical point, going to the left means that we output $1$. Therefore, $\tau(\delta(q_x,y\tO) \neq \tau(\delta(q_{x'},y\tO))$, and so the two states $q = q_x$ and $q' = q_{x'}$ are not equivalent. 

\caseI{2} There exists $k\geq 1$ such that $c+1 = F_{k+1}$ and $c+2 = F_{k+1} + 1$. In this case, by \cref{rem:c_in_partition}, the point $\fract{-(c+1)\phi}$ becomes an endpoint of the partition at stage $k$, whereas the point $\fract{-(c+2)\phi}$ becomes an endpoint of the partition at stage $k+1$. As in Case~1, let $x_1, x_2$ be the two critical strings, where now $|x_2| = |x_1|+1$.

Now, since $|x_1|,|x_2|$ have distinct parities, we see from Eq.~\eqref{eq:FnPhi_fract} that appending $\tO$ to $x_1$ means going to the corresponding left subinterval if and only if appending $\tO$ to $x_2$ means going to the corresponding right subinterval. Thus, appending $\tO$ either means output $0$ in both cases, or output $1$ in both cases.
Therefore, we can merge the last two intermediate states $q_{x_1}$ and $q_{x_2}$, since they are equivalent. 
Thus the total number of states remains $2(|(c)_F|+1)+1-g(c)$, as desired. 

The proof that no other two states are equivalent in this case is analogous to the proof in Case 1.
The only difference is that the argument in Case b of the minimality proof does not work. However, Case b never occurs here. Indeed, one can check that if $c = F_{k+1} - 1$, then the critical strings must be of the shapes $\tO^*$ and $(\varepsilon|\tO)(\tL\tO)^*$, and so, in particular, the second-to-last digits in $x_1,x_2$ are distinct.

\bigskip

Overall, we have proven that the number of states in the minimal lsd-first DFAO generating $(f(i+c))_{i \geq 0}$ is indeed $2(|(c)_F|+1)+1-g(c)$. Since both functions $c\mapsto |(c)_F|$ and $c\mapsto g(c)$ are Fibonacci-regular, the function that maps $c$ to the state complexity of $(f(i+c))_{i\geq 0}$ is also Fibonacci-regular; see \cite{Shallit:2023}, for instance, for more details on regular sequences. 
\end{proof}

\begin{example}
As an additional example, let $c=12$; we have $f(10)=f(11)=0$. Since $(-13)_F=(-\tL)\tO\tO\tO\tO$, $m_1=0$ and $(-14)_F=(-\tL)\tO\tO\tL\tO\tL\tO$, $m_2=7$. So in this case, we have two paths that are given by $\tO\tO\tO\tO$ and $\tO\tL\tO\tL\tO$, and the states $q_{\tO\tO\tO\tO}$ and $q_{\tO\tL\tO\tL\tO}$ are equivalent. Since $f(c+m_1)=f(c+m_2)=1$, this gives the minimal automaton presented in Figure~\ref{fig:ff12}.

\begin{figure}[H]
\centering
\includegraphics[width=6.5in]{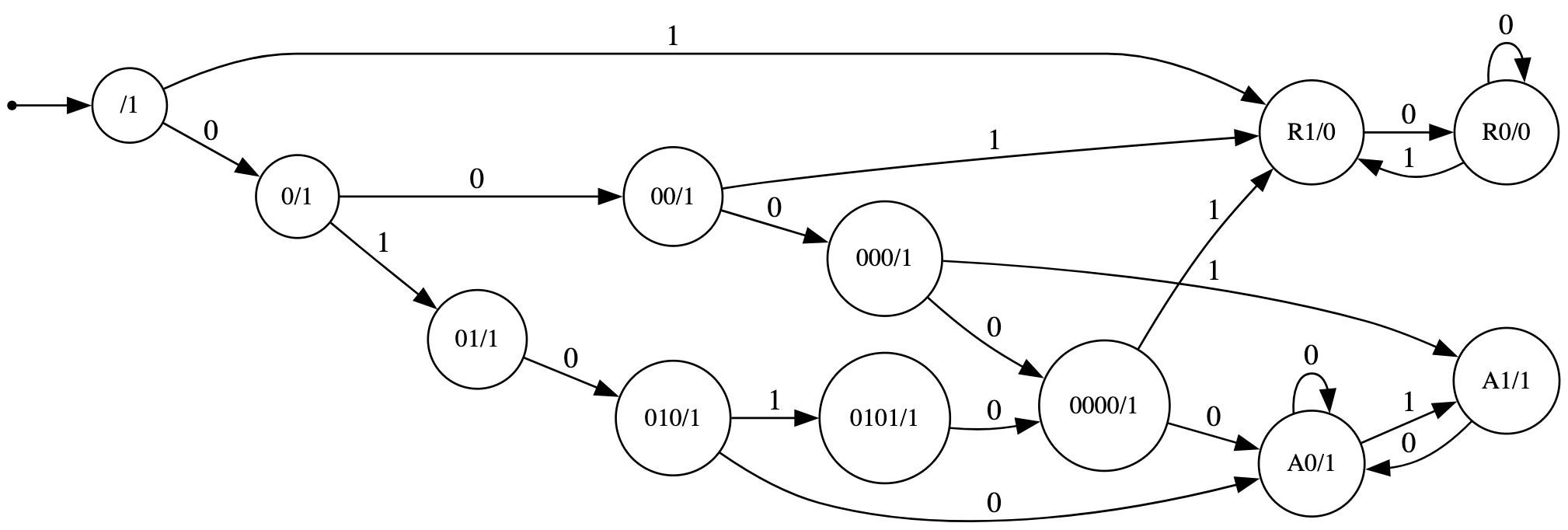}
\caption{Lsd-first DFAO for $(f(i+12))_{i \geq 0}$.}
\label{fig:ff12}
\end{figure}

\end{example}

\section{Msd-first details} \label{appendix:msd}

In order to prove \cref{lem:ingrid}, we need some auxiliary results. The following lemma states the expected property that dropping and appending a bit are inverse operations. 

\begin{lemma}\label{lem:append-drop}
For all integers $n$ with $n\geq 0$ or $n\leq - 3$ we have 
\begin{equation}\label{eq:append-drop}
    (\ndelta)' = n,
\end{equation}
where $a=0$ if $(n)_F$ ends with a $0$, and $a \in \{0,1\}$ otherwise. Moreover, for $n = -1$, Eq.~\eqref{eq:append-drop} holds for both $a=0$ and $a=1$, and for $n=-2$ Eq.~\eqref{eq:append-drop} holds only for $a=0$.
\end{lemma}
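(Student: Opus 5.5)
This is essentially a direct unwinding of the definitions of $n^{(a)}$ and $n'$ from \eqref{eq:append_delta_pos}, \eqref{eq:append_delta_neg} and the description of the ${}'$ operation. I would split into the nonnegative case, the negative case $n\le -3$, and the two small exceptional values $n=-1,-2$. Throughout, the key observation is that $n^{(a)}$ shifts every digit of the representation up by one index and inserts $a$ at index $2$, while ${}'$ shifts every digit down by one index and discards the digit at index $2$. Once we know that the word produced by appending is a \emph{valid} representation, these two operations are visibly inverse.

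\textbf{Nonnegative case.} For $n\ge 0$ write $n=\sum_{2\le i\le t}a_iF_i$, so $n^{(a)}=\sum_{i=3}^{t+1}a_{i-1}F_i+aF_2$. The digit string $a_t\cdots a_2 a$ contains no $\tL\tL$ exactly when either $a=0$, or $a_2=0$ (that is, $(n)_F$ ends in $0$, resp.\ $f(n)=0$). These are precisely the allowed values of $a$ in the statement. In that situation the string is the Zeckendorf representation of $n^{(a)}$ by uniqueness, i.e.\ $([x]_F)_F=x$. Applying ${}'$ then drops the digit at index $2$ (namely $a$) and shifts down, returning $\sum_{i=2}^{t}a_iF_i=n$. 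The edge case $n=0$ (empty representation, $0^{(a)}=a$) I would check directly.

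\textbf{Negative case $n\le -3$.} Write $n=-F_k+\sum_{2\le i\le t}a_iF_i$ with $t\le k-3$. Here $k\ge 4$, since $-1=-F_2$ and $-2=-F_3$. Then $n^{(a)}=-F_{k+1}+\sum_{i=3}^{t+1}a_{i-1}F_i+aF_2$. The first thing to verify is that this is again a valid negative representation in the sense of Section~\ref{sec:zeck}. This requires:
\begin{itemize}
\item the positive part $m^*=\sum a_{i-1}F_i+aF_2$ has a valid Zeckendorf word, which holds under the same digit condition as before;
\item its top index is at most $(k+1)-3$, which holds because $t+1\le k-2$;
\item $F_{k+1}$ is the smallest Fibonacci number $\ge -n^{(a)}=F_{k+1}-m^*$, i.e.\ $m^*<F_{k+1}-F_k=F_{k-1}$.
\end{itemize}
The last condition follows from the standard bound that a Zeckendorf sum with top index $\le k-2$ is $<F_{k-1}$.

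\textbf{Main obstacle.} I expect the verification that $F_{k+1}$ is the correct leading Fibonacci number to be the step needing the most care, especially near the boundary. When $m^*=0$ we need $-n^{(a)}=F_{k+1}$, and the "smallest $F_k\ge n$" convention must pick index $k+1$ rather than $k$ (because $F_k=F_{k+1}$ only for $k=1$). When $m^*$ is small, one must confirm that the index constraint $t\le k-3$ is exactly what keeps $m^*<F_{k-1}$.

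Once validity is established, the definition of ${}'$ for negative numbers, $(-F_{k+1}+\dots)'=-F_k+\sum a_{i+1}F_i$ after re-indexing, returns exactly $n$.

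\textbf{Exceptional values.} For $n=-1=-F_2$: $(-1)^{(a)}=-F_3+aF_2$, which is $-2$ or $-1$. By the table, $(-2)'=-1$, and $(-1)'=-F_1=-1$ by the convention stated after Table~\ref{tab2}, so both $a$ work. For $n=-2=-F_3$: $(-2)^{(0)}=-3$ and $(-3)'=-2$. However, $(-2)^{(1)}=-F_4+F_2=-2$, which is represented as $(-\tL)\tO$ rather than with leading $-F_4$ (the index constraint fails). Then $(-2)'=-1\ne -2$, so only $a=0$ is allowed, matching the statement.
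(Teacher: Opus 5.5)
Your proposal is correct and follows the same route as the paper. Both arguments show that the appended digit string is a valid (positive or negative) representation, so that dropping the last digit undoes the append, and both check $n=-1,-2$ by direct computation. You also supply a step the paper leaves implicit: that $m^*<F_{k-1}$ makes $-F_{k+1}$ the correct leading term of $n^{(a)}$ when $n\le -3$.

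Note that the statement as printed has the condition on $a$ reversed. Read literally it would allow $1^{(1)}=3$, yet $3'=2\neq 1$. What you actually prove, with $a=1$ permitted only when $(n)_F$ ends in $\tO$, is the intended version. It matches the paper's later use ``if $f(n)=1$ set $a=0$'' in the flipping lemma.
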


\begin{proof}
If $n\geq 0$ or $n\leq -3$, then the representation of $\ndelta$ in \eqref{eq:append_delta_pos} or \eqref{eq:append_delta_neg} is a valid representations (with the restriction that $a=0$ if $(n)_F$ ends with a $0$).
Therefore, dropping the last digit clearly reverses the operation.

Now let us check $n = -1$ and $n = -2$. If we append a $0$, then clearly we again get valid representations and dropping the $0$ reverses the operation.
For the case $n = -1 = -F_2$ and $a = 1$, we have
\begin{align*}
    \append{(-1)}{1}
    = -F_3 + F_2
    = -1,
\end{align*}
and since $(-1)' = -1$, Eq.~\eqref{eq:append-drop} holds in this case as well. 
For the case $n = -2 = -F_3$ and $a= 1$, we have $\append{(-2)}{1}= -F_4 + F_2 = -F_3= -2$, 
and so $(\append{(-2)}{1})' = -1 \neq -2$.
\end{proof}

\subsection{Flipping lemmas}

In this section we prove some fundamental but somewhat technical results about how our intervals change as we read additional bits of the input.  Our proofs are based on {\tt Walnut}, an automaton-based prover for properties of automatic sequences, including the Fibonacci word $\bf f$. For an introduction to {\tt Walnut}, see \cite{Shallit:2023}.

\begin{lemma}\label{lem:flip_append}
Let $x,y$ be negative integers such that the interval $I = (x \phi,y \phi) \bmod 1$ does not contain the two points $\fract{-\phi}$ and $\fract{-2\phi}$.
Let $n$ be a nonnegative integer. If $f(n) = 1$ set $a=0$, otherwise $a \in \{0,1\}$.
Then we have
\begin{equation}\label{eq:flip_append}
    \fract{n\phi} \in I
    \implies
    \fract{\ndelta\phi} \in (\append{y}{a} \phi, \append{x}{a} \phi)\bmod 1
    =: I_a.
\end{equation}
Moreover, if $x\neq -1$, the interval $I_a$ does not contain $\fract{-\phi}, \fract{-2\phi}$.
\end{lemma}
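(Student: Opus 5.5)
The plan is to make explicit the ``lifted'' value of $\fract{n\phi}$ given by the Zeckendorf digits, and to show that appending a digit acts on this lift as an orientation-reversing affine contraction. For $n=\sum_{2\le i\le t}a_iF_i$, Eq.~\eqref{eq:FnPhi_fract} gives $n\phi\equiv \eps(n):=\sum_{i}a_i(-1)^i\phi^{-i}\pmod 1$. Because there are no two consecutive $\tL$'s, the same geometric-series bounds as in \eqref{eq:phi_sum}--\eqref{eq:phi_1-bound} put $\eps(n)$ in a half-open interval $D$ of length $1$, roughly $[-\phi^{-2},\phi^{-1})$. I will check that the two endpoints of $D$ reduce mod $1$ to exactly $\fract{-\phi}$ and $\fract{-2\phi}$; this explains why these two points are the obstruction in the hypothesis. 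The same definition works for a negative integer $m=-F_k+\sum_{i\le k-3}b_iF_i$, with the extra term $-(-1)^k\phi^{-k}$. The first substantive step is to verify that $\eps(m)\in D$ still holds for such $m$, using the constraint $t\le k-3$. I expect $m=-1,-2$ to be the borderline cases, sitting on $\partial D$, consistent with \cref{lem:append-drop}.

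Next, directly from \eqref{eq:append_delta_pos} and \eqref{eq:append_delta_neg}, every digit moves from index $i$ to $i+1$ and a digit $aF_2$ is added. Hence, for nonnegative $n$ and negative $m$ alike,
\[
\eps(\ndelta)=-\phi^{-1}\eps(n)+a\phi^{-2}.
\]
This is a decreasing affine map $T_a$. When $a=1$ is allowed, i.e.\ when $f(n)=0$, the image is a genuine valid representation and its lift again lies in $D$. Now suppose $I=(x\phi,y\phi)\bmod 1$ avoids both $\fract{-\phi}$ and $\fract{-2\phi}$. Then $I$ contains no endpoint of $D$, so it lifts to an honest subinterval $(\eps(x),\eps(y))\subset D$, and we have $\fract{n\phi}\in I$ iff $\eps(n)\in(\eps(x),\eps(y))$. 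One point needs care here: the lifts of $x$ and $y$ must be the correct ones, meaning that the arc from $x\phi$ to $y\phi$ does not wrap around $\partial D$. This follows precisely from the avoidance hypothesis together with $\eps(x),\eps(y)\in D$. Applying $T_a$ reverses the order, giving $\eps(\ndelta)\in(\eps(\append{y}{a}),\eps(\append{x}{a}))$. Reducing mod $1$ yields \eqref{eq:flip_append}.

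For the ``moreover'' part, $T_a(D)$ has length $\phi^{-1}$. Its possible positions are $[-\phi^{-2},\phi^{-1})$ shifted and scaled, namely $(-\phi^{-2},\phi^{-3}]$ for $a=0$ and $(0,\phi^{-1}]$ for $a=1$. I need $T_a$ of the open interval $(\eps(x),\eps(y))$ to stay away from the lifts of $\fract{-\phi}$ and $\fract{-2\phi}$, that is, from $\partial D$. A boundary point of $D$ can be hit only as the image of a boundary point of $D$ itself. Since $y$ is negative, $\eps(y)$ could at worst be $\eps(-1)$ or $\eps(-2)$, so the only danger comes from endpoints equal to $-1$ or $-2$. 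Tracing through, the dangerous endpoint after the flip is $\append{x}{a}$, which comes from $x$. This is why the hypothesis is $x\neq -1$, and I will verify this by a short direct computation.

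The main obstacle will be the bookkeeping of the lift for negative representations near the boundary ($-1,-2,-3$), together with ensuring that the arc does not wrap. If the inequalities become messy, my fallback, in the spirit of the paper, is to encode the statement in first-order logic over Fibonacci representations and let {\tt Walnut} verify it. Membership $\fract{n\phi}\in(x\phi,y\phi)\bmod 1$ for integers is Fibonacci-synchronized, via comparisons of $\fract{\cdot\,\phi}$, and appending a digit is a synchronized operation, so this route is available.
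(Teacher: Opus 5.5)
Your lift-and-contract route is different from the paper's proof. The paper gives no analytic argument: it encodes the implication and the ``moreover'' clause in first-order logic, separately for $a=0$ and $a=1$, and has {\tt Walnut} verify them, which is exactly your fallback. The idea is viable and would explain \emph{why} the lemma holds. As written, however, three steps fail.

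\textbf{Sign and the forbidden set.} The identity you need is $F_i\phi=F_{i+1}-(-\phi^{-1})^i$, i.e.\ $\fract{F_i\phi}=\fract{(-1)^{i+1}\phi^{-i}}$ (check $i=2$: $\fract{\phi}=\phi^{-1}$, not $\phi^{-2}$). As printed, \eqref{eq:FnPhi_fract} has the opposite sign. So your $\eps(n)$ is congruent to $-n\phi$, and your statement ``$\fract{n\phi}\in I$ iff $\eps(n)\in(\eps(x),\eps(y))$'' is reversed. Correcting this mirrors everything: $D=(-\phi^{-1},\phi^{-2}]$ and $T_a(e)=-\phi^{-1}e-a\phi^{-2}$.

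The more serious problem is that $D$ has length $1$, so both its endpoints reduce to the \emph{same} point, $\fract{-\phi}$. They cannot account for $\fract{-2\phi}$. That point lifts to $\eps(-2)=-\phi^{-3}$, which is interior to $D$; only $-1$ lies on $\partial D$. It is the common endpoint of the lifted $0$-region $T_0(\bar D)=[-\phi^{-3},\phi^{-2}]$ and the lifted $1$-region $[-\phi^{-1},-\phi^{-3}]$. Hence ``stay away from $\partial D$'' does not control $\fract{-2\phi}$. Your sketch never really uses the $\fract{-2\phi}$-avoidance, so it would also ``prove'' the claim for intervals avoiding only $\fract{-\phi}$, where it is false. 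Indeed $T_1$ fixes $-\phi^{-3}$ (this is $\append{(-2)}{1}=-2$), and $I=(-4\phi,-3\phi)\bmod 1$ gives $I_1=(-4\phi,-5\phi)\bmod 1\ni\fract{-2\phi}$. The missing step runs as follows.
\begin{itemize}
\item If $a=1$ occurs, some $n$ with $\fract{n\phi}\in I$ has $f(n)=0$. By the $\fract{-2\phi}$-avoidance, $I$ then lies in the $0$-region, and $T_1$ maps the closed $0$-region onto the closed $1$-region.
\item $T_0$ maps all of $\bar D$ onto the closed $0$-region.
\end{itemize}
In both cases the forbidden lifts are endpoints of the target, so an \emph{open} image misses them.

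\textbf{The endpoint $x=-1$.} You claim the avoidance hypothesis makes $\eps(x),\eps(y)$ the correct lifts; this fails exactly for $x=-1$. Here $\eps(-1)=\phi^{-2}$ is the right end of $D$, but an arc \emph{starting} at $\fract{-\phi}$ must be lifted from $-\phi^{-1}=\eps(-1)-1$. Since $T_a(e-1)=T_a(e)+\phi^{-1}\not\equiv T_a(e)\pmod 1$, the choice matters. For example, with $x=-1$, $y=-2$ your lifted interval $(\eps(x),\eps(y))=(\phi^{-2},-\phi^{-3})$ is empty.

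The true image is then $(\append{y}{0}\phi,-\phi)\bmod 1$, not $I_0=(\append{y}{0}\phi,-2\phi)\bmod 1$. The implication survives only because the former is contained in the latter: from a point of the $0$-region, moving counterclockwise one meets $\fract{-\phi}$ before $\fract{-2\phi}$. The case $a=1$ is vacuous here, since $I\subseteq(-\phi,-2\phi)\bmod 1$ forces $f(n)=1$. This mismatch between the true image and the formal interval $I_a$ is the actual source of the exception $x\neq-1$; an endpoint $y\in\{-1,-2\}$ plays no role.

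With these repairs your argument becomes a complete, human-readable proof that explains the role of both hypotheses. You also still need the routine check that $\eps(m)\in(-\phi^{-1},\phi^{-2}]$ for negative $m$, with $\eps(m)=\phi^{-2}$ only for $m=-1$; this is true, using $t\le k-3$. The paper's {\tt Walnut} certificate sidesteps all this lift bookkeeping, but it offers no insight into the mechanism.
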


\begin{proof}
First we define the following {\tt Walnut} commands
\begin{verbatim}
reg lshift {0,1} {0,1} "([0,0]|[0,1][1,1]*[1,0])*":
# accepts (x,y) iff y is the left-shift of x.

reg rshift {0,1} {0,1} "([0,0]|[1,0][1,1]*[0,1])*(()|[1,0][1,1]*)":
# accepts (x,y) if y is the right-shift of x

def phin "?msd_fib (x=0 & y=0) | Ez $lshift(x-1,z) & y=z+1":
# (x,y) is accepted iff floor(x*phi) = y.

reg isfib msd_fib "0*10*":
# accepts x iff x is a positive Fibonacci number.

def smallest_fib_geq "$isfib(y) & y>=x & Az ($isfib(z) & z>=x) => z>=y":
# accepts (x,y) iff y is the smallest Fibonacci number >= x, for x>=1.
# 5 states

def fib_neg_repr "?msd_fib (x=0 & y=0 & z=0) | 
    (x>=1 & $smallest_fib_geq(x,y) & z+x=y)":
# accepts (x,y,z) iff -y+z is the negative Fibonacci expansion of -x,
    where y is a Fibonacci number.
# 9 states

def prime_inv_0 "?msd_fib Er,s,t,u $fib_neg_repr(x,r,s) &
    $rshift(t,r) & $rshift(u,s) & F[t]=@0 & F[u]=@0 & t=y+u":
# accepts (x,y) if -y is the integer obtained from 
    appending 0 to the representation of -x
# 7 states

def prime_inv_1 "?msd_fib (x=1 & y=1) | (x=2 & y=2) |
    Er,s,t,u $fib_neg_repr(x,r,s) &
    $rshift(t,r) & $rshift(u,s) & F[t]=@0 & F[u]=@1 & t=y+u":
# accepts (x,y) if -y is the integer obtained from 
    appending 1 to the representation of -x
# 9 states
\end{verbatim}

Note that the special cases $(x,y)=(1,1)$ and $(x,y) = (2,2)$ in {\tt prime\_inv\_1} account for the
fact that if we append $1$ to the representation of $-1, -2$, we get invalid representations. But these cases are still covered in the lemma.

Next, we define the two following functions about fractional parts. Notice that we are not using circular interpretation here. 

\begin{verbatim}
def compare_frac_part "?msd_fib 
    (x>y & Et,u,v $phin(x,t) & $phin(y,u) & $phin(x-y,v) & v+u<t) | 
    (x<y & Et,u,v $phin(x,t) & $phin(y,u) & $phin(y-x,v) & u<t+1+v)":
# accepts (x,y) iff {x*phi} < {y*phi}  for x, y >= 0 
# 22 states

def compare_frac_part2 "?msd_fib Et,u,v 
    $phin(x,t) & $phin(y,u) & $phin(x+y,v) & v>u+t":
# accepts (x,y) iff {-x*phi} < {y*phi} for x, y >= 1  
# 26 states
\end{verbatim}

\begin{verbatim}
def lies_in "?msd_fib  
 ($compare_frac_part(z,x) & $compare_frac_part2(x,y) 
             & ~$compare_frac_part2(y,z)) |
 ($compare_frac_part(x,z) & 
    ($compare_frac_part2(x,y) | ~$compare_frac_part2(z,y)))":
# checks if  {y*phi} is in the interval ({-x*phi}, {-z*phi}) 
    for the circular interpretation of intervals
# 97 states

def cw "?msd_fib 
    ($compare_frac_part(x,z) & $compare_frac_part(x,y) & 
        $compare_frac_part(y,z)) |
    ($compare_frac_part(z,x) & ($compare_frac_part(x,y) |   
        $compare_frac_part(y,z)))":
# checks if {y*phi} is in the interval ({x*phi}, {z*phi}) 
    for circular interpretation of intervals; cw = clockwise
# 44 states
\end{verbatim}

Now that we have these, we can rewrite the theorem statement in first-order logic and ask {\tt Walnut} to prove it:
\begin{verbatim}
eval check_flip_append_0 "?msd_fib An,n0,a,b,a0,b0 
    (a>=1 & b>=1 & F[n0]=@0 & 
    $rshift(n0,n) & $prime_inv_0(a,a0) & $prime_inv_0(b,b0) & 
    ($cw(a,1,b)|a=1|b=1) & ($cw(a,2,b)|a=2|b=2) 
    & $lies_in(a,n,b)) 
    => ($lies_in(b0,n0,a0))":
\end{verbatim}

\begin{verbatim}
eval check_flip_append_0_interval "?msd_fib An,n0,a,b,a0,b0 
    (a>=2 & b>=1 & F[n0]=@0 & 
    $rshift(n0,n) & $prime_inv_0(a,a0) & $prime_inv_0(b,b0) & 
    ($cw(a,1,b)|a=1|b=1) & ($cw(a,2,b)|a=2|b=2) & 
    $lies_in(a,n,b)) 
    => (($cw(b0,1,a0)|a0=1|b0=1) & ($cw(b0,2,a0)|b0=2|a0=2) )":
\end{verbatim}

\begin{verbatim}
eval check_flip_append_1 "?msd_fib An,n1,a,b,a1,b1 
    (a>=1 & b>=1 & F[n1]=@1 & 
    $rshift(n1,n) & $prime_inv_1(a,a1) & $prime_inv_1(b,b1) & 
    ($cw(a,1,b)|a=1|b=1) & ($cw(a,2,b)|a=2|b=2) & 
    $lies_in(a,n,b)) 
    => ($lies_in(b1,n1,a1))":
\end{verbatim}

\begin{verbatim}
eval check_flip_append_1_interval "?msd_fib An,n1,a,b,a1,b1 
    (a>=2 & b>=1 & F[n1]=@1 & 
    $rshift(n1,n) & $prime_inv_1(a,a1) & $prime_inv_1(b,b1) & 
    ($cw(a,1,b)|a=1|b=1) & ($cw(a,2,b)|a=2|b=2) & 
    $lies_in(a,n,b)) 
    => (($cw(b1,1,a1)|a1=1|b1=1) & ($cw(b1,2,a1)|b1=2|a1=2))":
\end{verbatim}
All of these return {\tt TRUE}.
\end{proof}

\begin{remark}
If $x=-1$ in~\cref{lem:flip_append}, then the interval $I_a$ does contains one of the two points $\fract{-\phi}, \fract{-2\phi}$.
Indeed, for $a = 0$, note that $\append{(-1)}{0} = -2$, so $I_0$ is of the form $I_0 =  (\append{y}{0} \phi,-2\phi) \bmod 1$ for every $y$. 
On the one hand, by the first part of~\cref{lem:flip_append}, we know that for all $n$ with $\fract{n\phi}\in I$ we have $\fract{\nzero\phi}\in I_0$. On the other hand, since the least significant digit of $\nzero$ is of course $0$, we know from~\cref{lem:01-areas} that $\fract{\nzero\phi}\in (-2\phi, -\phi)\bmod 1$. Thus, we have 
\[
    I_0 \cap (-2\phi, -\phi)\bmod 1=(\append{y}{0} \phi,-2\phi)\bmod 1 \cap (-2\phi, -\phi)\bmod 1 \neq \emptyset,
\]
which implies $\fract{-\phi} \in I_0$. Similarly, one can see that $\fract{-2\phi} \in I_1$.
\end{remark}

\begin{lemma}\label{lem:drop_flipinterval}
Let $x,y$ be two negative integers with $(x,y)\neq (-1,-2)$ and such that $\fract{-\phi} , \fract{-2\phi} \notin (x\phi,y\phi)\bmod 1$.
Let $r$ be a negative integer. Then
\[
    \fract{r\phi} \in  (x\phi,y\phi)\bmod 1
    \implies
    \fract{r'\phi} \in (y'\phi,x'\phi)\bmod 1.
\]
\end{lemma}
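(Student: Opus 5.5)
The plan is to follow the same route as for \cref{lem:flip_append}: translate the statement into a first-order formula over the Zeckendorf representations of the positive integers $-x,-y,-r$, and let {\tt Walnut} decide it. All the needed predicates are already available from the proof of \cref{lem:flip_append}. The ${}'$ operation on a negative number $-u$ comes from {\tt fib\_neg\_repr}$(u,p,q)$, which gives $-u=-p+q$. Applying {\tt rshift} to both $p$ and $q$ yields $p_1,q_1$ with $u'=-(p_1-q_1)$. The special case $(-1)'=-1$ needs separate handling: there $p=F_2=1$ and the right shift would give $0$, so we add the clause $u=1\Rightarrow$ result $1$, in the same way as the special cases in {\tt prime\_inv\_1}.

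With this I would define a predicate {\tt prime\_neg}$(u,v)$, meaning that $-v$ is the result of dropping the last digit of the negative representation of $-u$, and check that it is functional. The membership conditions are then handled by existing predicates. The condition $\fract{r\phi}\in(x\phi,y\phi)\bmod 1$ is {\tt cw} applied to the positive magnitudes. More precisely, it is $\fract{-(-r)\phi}\in(\fract{-(-x)\phi},\fract{-(-y)\phi})$. Since all three points are of the form $\fract{-u\phi}$ and $\fract{-u\phi}=1-\fract{u\phi}$ for $u\ge1$, this is equivalent to $\fract{(-r)\phi}$ lying in the reversed circular interval $(\fract{(-y)\phi},\fract{(-x)\phi})$. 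That is {\tt cw}$(Y,R,X)$ with $X=-x$, $Y=-y$, $R=-r$.

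The hypothesis on $\fract{-\phi},\fract{-2\phi}$ is encoded exactly as in {\tt check\_flip\_append\_0}, i.e.\ as {\tt cw}$(\cdot,1,\cdot)$ and {\tt cw}$(\cdot,2,\cdot)$ with the orientation adjusted by the same reversal. The excluded pair is written $\neg(X=1\wedge Y=2)$. The statement to evaluate is then
\[
\forall X,Y,R,X',Y',R'\colon\ \bigl(\text{hyp}(X,Y)\wedge \text{in}(X,R,Y)\wedge \text{pn}(X,X')\wedge\text{pn}(Y,Y')\wedge\text{pn}(R,R')\bigr)\Rightarrow \text{in}(Y',R',X'),
\]
where pn stands for {\tt prime\_neg} and in for the appropriately oriented {\tt cw}/{\tt lies\_in} predicate. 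I expect {\tt Walnut} to return {\tt TRUE}.

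The main obstacle I anticipate is getting the circular-orientation bookkeeping exactly right: the sign flip $u\mapsto -u$, the reversal of endpoints in the conclusion, and the degenerate cases where an endpoint equals $-1$ or $-2$, or where $r$ coincides with an endpoint after shifting. A mistake here would either make the formula false or make it vacuous. As a sanity check, I would first evaluate the formula without the hypothesis on $\fract{-\phi},\fract{-2\phi}$ and without excluding $(-1,-2)$, and confirm that {\tt Walnut} then returns {\tt FALSE}. I would also spot-check the conclusion numerically on the $r=-14$ example partition.

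If a purely conceptual proof is preferred, I would use \eqref{eq:FnPhi_fract}. Dropping the last digit of a negative representation multiplies the ``error'' around the shifted value by $-\phi$, roughly $\fract{u'\phi}\approx -\phi\fract{u\phi}$ up to a term determined by the dropped digit. So on an interval not straddling the cut points $\fract{-\phi},\fract{-2\phi}$, where the dropped digit is constant by \cref{lem:01-areas}, the map is an orientation-reversing affine map, and order is reversed. I would nonetheless commit to the {\tt Walnut} verification for rigor.
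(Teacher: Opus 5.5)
Your route is the paper's: its proof is a single \texttt{Walnut} check, \texttt{check\_flip\_drop}. That check uses a predicate for ${}'$ on magnitudes with the special case $(-1)'=-1$ (the paper's \texttt{negprime}), expresses membership by \texttt{cw} on the positive magnitudes with reversed orientation, and encodes the cut-point hypothesis as in \texttt{check\_flip\_append\_0}. The problem is the exclusion you actually write down. With only $\neg(X=1\wedge Y=2)$, your formula evaluates to \texttt{FALSE}.

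Take $x=-2$, $y=-1$, $r=-3$. The points $\fract{-\phi}$ and $\fract{-2\phi}$ are the endpoints of $(-2\phi,-\phi)\bmod 1$, so the hypothesis holds, and $\fract{-3\phi}\approx 0.146$ lies in this interval. But $x'=(-2)'=-1=y'$, so the conclusion interval is the degenerate $(-\phi,-\phi)\bmod 1$, and \texttt{cw}$(1,2,1)$ is false. This is exactly the degenerate case you named as the main obstacle but left unresolved.

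It is not an encoding artifact. The arc $(-2\phi,-\phi)\bmod 1$ is where the last digit is $0$, and it has length $\phi^{-1}$. Your own affine heuristic, with slope $-\phi$, stretches it onto the whole circle minus a point. So the values $\fract{r'\phi}$ are dense there, and no interval with distinct endpoints could serve as the conclusion.

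So the statement as printed, which excludes only $(-1,-2)$, must also exclude $(x,y)=(-2,-1)$, which degenerates in the same way because $(-1)'=(-2)'=-1$. The paper's \texttt{check\_flip\_drop} does exactly this, with the guard $\neg((a=1\wedge b=2)\vee(a=2\wedge b=1))$. Add $\neg(X=2\wedge Y=1)$ and your check becomes the paper's. The extra exclusion costs nothing downstream. In the proof of \cref{lem:ingrid}, Case 2 notes $r_\ell^{(0)}\neq-2$ and Case 3 notes $r_\ell^{(1)}\neq-1$, precisely so that neither excluded pair occurs when the lemma is applied.
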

\begin{proof}
Once again, we use the {\tt Walnut} theorem-prover.
\begin{verbatim}
def negprime "?msd_fib (x=1 & y=1) | 
    (x>=2 & Er,s,t,u $ingrid_exp(x,r,s) & 
    $rshift(r,t) & $rshift(s,u) & y+u=t)":
\end{verbatim}

\begin{verbatim}
eval check_flip_drop "?msd_fib Aa,b,r,t,u,v 
    (a>=1 & r>=1 & b>=1 & (~((a=1&b=2)|(a=2&b=1))) & 
    $cw(b,r,a) & ($cw(a,1,b)|a=1|b=1) & ($cw(a,2,b)|a=2|b=2) &
    $negprime(a,t) & $negprime(b,u) & $negprime(r,v)) 
    => $cw(t,v,u)":
\end{verbatim}
And {\tt Walnut} returns {\tt TRUE}.
\end{proof}

\subsection{Proof of~\cref{lem:ingrid}}

We are now ready to prove~\cref{lem:ingrid}.

\begin{proof}[of~\cref{lem:ingrid}] Let $r = - F_R + \sum_{i=2}^{R-3} b_i F_i$ be the Fibonacci representation of $r\leq -2$, and define $r_k= -F_{R-k} + \sum_{i=2}^{R-3-k} b_{i+k} F_{i}$ for $0\leq k \leq R-2$. 
Recall that $\cP(r)$ is the partition induced by the points $r_k$ with $0\leq k \leq R-2$, and that $r_{R-2} = -1$ and $r_{R-3} = -2$. 
Thus, in particular, no interval $I \in \cP(r)$ contains $\fract{-\phi}$ or $\fract{-2\phi}$ in its interior. Let us prove that $\cP(r)$ is consistent with the msd-representation, i.e., that for every $I\in \cP(r)$ there exist $I_0, I_1 \in \cP(r)$ such that for all nonnegative integers $n$ we have
\[
    \fract{n \phi} \in I 
    \implies 
    \fract{\nzero \phi} \in I_0 \text{ and } \fract{\none \phi} \in I_1,
\]
where if $f(n) = 1$ we ignore $\none$ and $I_1$.
Note that all intervals in $\cP(r)$ are half-open, and all endpoints are of the shape $\fract{m\phi}$ with negative integers $m$.
Since for every nonnegative integer $n$ the point $\fract{n\phi}$ cannot coincide with an endpoint $\fract{m\phi}$ with negative $m$, we write all intervals from $\cP(r)$ as open intervals, for simplicity.

First consider the case $r=-2$. Then the partition $\cP(r) = \cP(2)$ consists of the two intervals $(-2\phi,-\phi)\bmod 1$ and $(-\phi, -2\phi)\bmod 1$. Let $I_0$ be the first interval, and $I_1$ the second one. Then, by~\cref{lem:01-areas}, we have $\fract{\nzero\phi} \in I_0$ and $\fract{\none \phi} \in I_1$ for all nonnegative $n$ (where we ignore $\none$ and $I_1$ if $f(n) = 1$).

Now assume $-r\leq -3$. In particular, we have that $\fract{-3\phi}$ is an endpoint in the partition, since $r_{R-3}=-3$. 
Let $I \in \cP(r)$ with 
\[ I = (r_k \phi, r_\ell \phi) \bmod 1,
\]
where $0 \leq k,\ell \leq R-2$ and $k\neq \ell$.

We distinguish the  following three cases; namely, if $r_k\leq -3$, $r_k=-1$, or $r_k=-2$. Indeed, the definitions of the desired sets $I_0,I_1$ will be slightly different in each case.

\caseI{1} $r_k\leq -3$. By~\cref{lem:flip_append} we have
\[
    \fract{n\phi} \in I \implies
    \fract{\ndelta\phi} \in 
        \left(\append{(r_\ell)}{a} \phi, 
        \append{(r_k)}{a} \phi\right) \bmod 1
        =: I_a
\]
for $a \in \{0,1\}$ (and $a = 0$ if $f(n) = 1$).
Moreover, the interval $I_a$ does not contain the two points $\fract{-\phi}, \fract{-2\phi}$.
Our goal is to prove that $I_a$ is fully contained in an interval from $\cP(r)$. 
We distinguish two subcases depending on the value of $r_\ell$. 

\caseII{1a} $r_\ell=-1$ or $r_\ell \leq -3$. 
To get a contradiction, assume that $I_a$ is not contained in an interval from $\cP(r)$. This means that there exists a point $\fract{r_j\phi}$, for some $0\leq j \leq R-2$ in the interior of $I_a$, i.e., 
\begin{equation}\label{eq:contrary_rj}
    \fract{r_j\phi} \in (\append{(r_\ell)}{a} \phi, \append{(r_k)}{a} \phi) \bmod 1.
\end{equation}
Since by assumption $r_k,r_\ell \neq -2$, we can now apply~\cref{lem:drop_flipinterval} and~\cref{lem:append-drop} to \eqref{eq:contrary_rj}, 
and we obtain
\begin{equation}\label{eq:rj_I}
    \fract{r_j'\phi} 
    \in \left(
        (\append{(r_k)}{a})'\phi,
        (\append{(r_\ell)}{a})' \phi
        \right) \bmod 1
    =(r_k \phi, r_\ell \phi) \bmod 1 
    = I.
\end{equation}
If $j < R-2$, we have $r_j' = r_{j+1}$ by definition. If $j = R-2$, we have $r_j' = (-1)' = -1 = r_j$.
Either way, Eq.~\eqref{eq:rj_I} contradicts the fact that $I \in \cP(r)$.

\caseII{1b} $r_\ell=-2$. In this case, $I=(r_k\phi,-2\phi)\bmod 1\subset (-\phi,-2\phi)\bmod 1$, so by \cref{lem:01-areas} we have $f(n)=1$ for all $n$ with $\fract{n\phi} \in I$. Thus, it suffices to consider $a=0$. Then the rest of the proof goes through exactly as in the previous subcase, since for $a=0$ we still have $(\append{(r_k)}{a})' = r_k$ and $(\append{(-2)}{a})' = -2$.

\caseI{2} $r_k=-1$. Then $I=(-\phi,r_\ell\phi) \bmod 1\subseteq (-\phi,-2\phi)\bmod 1$, and so as in Case 1b, we can set $a=0$. 
As before, \cref{lem:flip_append} implies that for $n$ with $\fract{n\phi} \in I$ we have
\[
    \fract{\nzero\phi} \in (\append{(r_\ell)}{0} \phi, \append{(-1)}{0} \phi) \bmod 1= (\append{(r_\ell)}{0} \phi, -2 \phi)\bmod 1.
\]
In view of~\cref{lem:01-areas}, since the least significant digit of $\nzero$ is of course $0$, we must actually have
\[
    \fract{\nzero\phi} \in \left(\append{(r_\ell)}{0} \phi, -\phi\right) \bmod 1=: I_0.
\] 
Now as in Case 1, we check that no other endpoint lies in $I_0$.
Assuming the contrary, we have $\fract{r_j \phi} \in (\append{(r_\ell)}{0} \phi, - \phi) \bmod 1,$  for some $0\leq j\leq R-2$.
Note that $\append{(r_\ell)}{0} \neq -2$ because $r_\ell \neq -1$.
Therefore, we can apply~\cref{lem:drop_flipinterval}, and obtain
\[
    \fract{r_j'\phi} \in (-\phi, r_\ell \phi) \bmod 1= I,
\]
and we can finish the proof as in Case 1a.

\caseI{3} $r_k=-2$, and so $I = (-2 \phi, r_\ell \phi) \bmod 1$. Note that $r_\ell = -1$ is impossible in this case since $r\leq -3$ and $\fract{-3\phi} \in (-2\phi, -\phi)\bmod 1$, so this last interval is not in $\cP(r)$.
Thus, we may assume $r_\ell \leq -3$. 
For $a=0$, \cref{lem:flip_append} again implies that for $n$ with $\fract{n\phi} \in I$ we have
\[
    (\append{(r_\ell)}{0} \phi, \append{(-2)}{0} \phi)\bmod 1
    =(\append{(r_\ell)}{0} \phi, -3\phi)\bmod 1
    := I_0,
\]
and the interval $I_0$ does not contain $\fract{-\phi}, \fract{-2\phi}$. We can show that $I_0 \in \cP(r)$ as in Case~1a (since for $a=0$ we still have $(\append{(-2)}{a})' = -2$).

Finally, for $a = 1$, \cref{lem:flip_append} implies that for $n$ with $\fract{n\phi} \in I$ we have
\[
    \fract{\none \phi}
    (\append{(r_\ell)}{1} \phi, -2\phi) \bmod 1
    :=I_1,
\] 
and the interval $I_1$ does not contain $\fract{-\phi}, \fract{-2\phi}$.
We check that no other endpoint lies in $I_1$.
Assuming the contrary, we have $\fract{r_j\phi} \in I_1$ for some $0\leq j\leq R-2$. Then from \cref{lem:drop_flipinterval} (note that $\append{(r_\ell)}{1}\neq -1$) and~\cref{lem:append-drop} (note that $r_\ell \neq -2$) we get that $\fract{r_j'\phi} \in (-\phi, r_\ell\phi) \bmod 1$ (since $(-2)'=-1$). 
Since $\fract{r_j\phi} \in I_1 \subset (-\phi,-2\phi)\bmod 1$, the least significant digit of $r_j$ is $1$. But then the second least significant digit has to be $0$. In other words, the least significant digit of $r_j'$ is $0$, and so in fact
\[
    \fract{r_j'\phi} \in (-2\phi, r_\ell\phi)\bmod 1=I,
\]
and we can finish the proof as in Case 1a. 
\end{proof}

\end{document}